\definecolor{MyDarkBlue}{rgb}{0,0.08,0.45}
\providecommand{\eprint}[1]{}
\renewcommand{\eprint}[1]{arXiv:\href{http://arxiv.org/abs/#1}{#1}}
\newtheorem{theorem}{Theorem}[section]
\newtheorem{corollary}{Corollary}
\newtheorem{lemma}[theorem]{Lemma}
\newtheorem{proposition}{Proposition}
\theoremstyle{definition}
\newtheorem{remark}{Remark}
\newtheorem{example}{Example}
\providecommand{\subjclass}[1]{}
\providecommand{\keywords}[1]{}
\providecommand{\email}[1]{}
\providecommand{\thanks}[1]{}
\renewcommand{\thanks}[1]{}
\DeclareOldFontCommand{\brianup}{\upshape}{\mathrm}
\DeclareSymbolFont{EUR}{U}{eur}{m}{n}
\DeclareSymbolFontAlphabet{\eur}{EUR}
\DeclareSymbolFont{EUB}{U}{eur}{b}{n}
\DeclareSymbolFontAlphabet{\eub}{EUB}
\DeclareSymbolFont{AMSb}{U}{msb}{m}{n}
\DeclareSymbolFontAlphabet{\mathbb}{AMSb}
\providecommand\ker{\hbox{\rm Ker\,}}
\renewcommand\ker{\mathop{\rm Ker}}
\newcommand{\scp}{A\sp 0}
\newcommand\eurA{\eur{A}}
\newcommand\eubA{\eub{A}}
\newcommand{\ech}{\mbox{\small\it q}}
\newcommand{\bme}{\eub{e}}
\newcommand{\bfN}{\mathrm{N}}
\newcommand\eurM{\eur{M}}
\newcommand{\range}{\mathop{\rm Range\,}}
\newcommand{\p}{\partial}
\renewcommand{\P}{\undefinednothing}
\newcommand{\mom}{P\hspace{-1.45ex}\slash\hspace{0.45ex}}
\newcommand{\at}[1]{\vert\sb{\sb{#1}}}
\def\R{\mathbb{R}}
\def\P{\mathbb{P}}
\def\A{\mathbb{A}}
\newcommand{\C}{\mathbb{C}}
\newcommand{\N}{\mathbb{N}}\newcommand{\Z}{\mathbb{Z}}
\newcommand{\abs}[1]{\vert #1 \vert}
\newcommand{\norm}[1]{\Vert #1 \Vert}
\newcommand{\sothat}{{\rm ;}\ }
\newcommand{\const}{\mathop{\rm const}}
\makeatletter\@addtoreset{equation}{section}
\renewcommand{\Re}{\mathop{\rm{R\hskip -1pt e}}\nolimits}
\renewcommand{\Im}{\mathop{\rm{I\hskip -1pt m}}\nolimits}
\title{Small amplitude solitary waves in the Dirac--Maxwell system}
\author{
{\sc Andrew Comech}
\\
{\small\it
Texas A\&M University, College Station, TX 77843, USA}
\\
{\small\it IITP, Moscow 127051, Russia}
\\~\\
{\sc David Stuart}
\\
{\small\it University of Cambridge, Cambridge CB3 0WA, UK}
}
\subjclass{Primary: 35C08, 35Q41, 37K40, 81Q05; Secondary: 37N20.}
 \keywords{Solitary waves, Dirac--Maxwell system.}
 \email{comech@math.tamu.edu}
 \email{D.M.A.Stuart@damtp.cam.ac.uk}
\thanks{The first author was supported by the Russian Foundation
for Sciences (project 14-50-00150).}
\begin{document}
\maketitle


\bigskip




\begin{abstract}
We study nonlinear bound states, or solitary waves, 
in the Dirac--Maxwell system, proving the 
existence of solutions in which the Dirac wave function is of the form
$\phi(x,\omega)e^{-i\omega t}$, with
$\omega\in(-m,\omega\sb \ast)$ for some $\omega\sb \ast>-m$. The solutions
satisfy
$\phi(\,\cdot\,,\omega)\in H\sp 1(\R^3,\C^4)$, and
are small amplitude in the sense that
$\norm{\phi(\,\cdot\,,\omega)}^2\sb{L\sp 2}=O(\sqrt{m+\omega})$
and
$\norm{\phi(\,\cdot\,,\omega)}\sb{L\sp\infty}=O(m+\omega)$. The method of
proof is an implicit function theorem argument based on 
the identification of the nonrelativistic limit
as the ground state of the Choquard equation. This identification is in some ways
unexpected on account of the repulsive nature of the electrostatic interaction
between electrons, and arises as a manifestation of certain peculiarities
(Klein paradox)
which result from attempts to interpret the 
Dirac equation as a single particle quantum mechanical wave equation. 
\end{abstract}

\bigskip

\hfill
{\it To Vladimir Georgiev on the occasion of his 60th birthday}

\bigskip

\section{Introduction and results}
\label{sec-dm}

The Dirac equation, which appeared in \cite{dirac-1928}
just two years after the Schr\"odinger equation,
is the correct Lorentz-invariant equation to
describe particles with nonzero spin
when relativistic effects cannot be ignored.
The Dirac equation predicts accurately the 
energy levels of an electron in the Hydrogen atom, yielding
relativistic corrections to the spectrum of the Schr\"odinger equation.
Further higher order corrections arise on account of
electromagnetic self-interactions,
described mathematically by the
Dirac--Maxwell Lagrangian, 
which aims to provide
a self-consistent description of the dynamics
of an electron interacting with its own electromagnetic field.
The perturbative treatment of the Dirac--Maxwell
system in the framework of second quantization
allows computation of quantities such as the
energy levels and scattering cross-sections, which have been
compared successfully with experiment, although this quantum formalism
does not provide the type of tangible description of 
particles and dynamical processes familiar 
from classical physics.
Mathematically, the quantum theory (QED) has not been
constructed, and indeed may not exist in the accepted analytical sense.
In particular it is a curious fact that although the electron 
is the most stable elementary particle known to physicists today,
there is no mathematically precise formulation and proof of its
existence and stability.
This has resulted in an enduring interest in the classical Dirac--Maxwell
system, both in the physics and mathematics literature. Regarding
the former, the relevance of the {\em classical} equations of motion
for QED has been widely debated. The prevalent view
seems to be that the Dirac fermionic field does not have a direct 
meaning or limit in classical physics, and hence that the classical system is
not really directly relevant to the world of observation.
Nevertheless,
there have been numerous attempts,
by Dirac himself
as well as by many others -- see
\cite{MR0139402,wakano-1966,MR1364144} and references therein --
to construct localized solutions of the classical system, or some modification
thereof, with the aim of obtaining a more cogent mathematical description
of the electron (or other fundamental particles).
We consider the system of Dirac--Maxwell equations,
where the electron, 
described by the standard ``linear'' Dirac equation,
interacts with its own electromagnetic field
which is in turn required to obey the Maxwell equations:
\begin{equation}\label{dirac-maxwell-classical}
\begin{cases}
\gamma\sp\mu(i\p\sb\mu-\ech A\sb\mu)\psi-m\psi=0,
\\
\p\sp\mu\p\sb\mu A\sp\nu=J\sp\nu,
\quad \p\sb\mu A\sp\mu=0,
\qquad
0\le\mu,\nu\le 3,
\end{cases}
\end{equation}
with the charge-current density
$J\sp\mu=(\rho,\mathbf{J})\in\R\times\R^3$
generated by the spinor field itself:
\begin{equation}\label{current-density}
J\sp\mu=\ech\bar\psi\gamma\sp\mu\psi,
\qquad
0\le\mu\le 3.
\end{equation}
Above, $\rho$ and $\mathbf{J}$
are the charge and current, respectively.
We denote $\bar\psi=(\gamma\sp{0}\psi)\sp\ast=\psi\sp\ast\gamma\sp{0}$,
with
$\psi\sp\ast$
the hermitian conjugate of $\psi\in\C^4$.
The charge is denoted by $\ech$
(so that for the electron $\ech<0$);
the fine structure constant is the dimensionless coupling constant
$\alpha\equiv\frac{\ech^2}{\hbar c}\approx 1/137$.
We choose the units so that $\hbar=c=1$.
We have written the Maxwell equations using the Lorentz gauge condition
$\p\sb\mu A\sp\mu=0$.
The Dirac $\gamma$-matrices satisfy the
anticommutation relations
\[
\{\gamma\sp\mu,\gamma\sp\nu\}=2g\sp{\mu\nu},
\qquad
0\le\mu,\nu\le 3,
\]
with $g\sp{\mu\nu}=\mathop{\rm diag}[1,-1,-1,-1]$.
The four-vector potential $A\sp\mu$ has components 
$({\scp},\mathbf{A})$,
with
$\mathbf{A}=\{A\sp j\}_{j=1}^{3}$,
so that the lower index version
$A\sb\mu=g\sb{\mu\nu}A\sp\nu$ has components
$({\scp},-\mathbf{A})$ so $A\sb 0={\scp}$.
Following
\cite{MR0441102} and \cite{MR0187641},
we define the Dirac $\gamma$-matrices by
\begin{equation}
\gamma\sp{j}
=\left(
\begin{matrix} 0&\sigma\sb{j} \\ -\sigma\sb{j}&0\end{matrix}
\right),
\qquad
\gamma\sp{0}
=\left(
\begin{matrix} I\sb{2}&0 \\ 0&-I\sb{2}\end{matrix}
\right),
\end{equation}
where $I\sb{2}$ is the $2\times 2$ unit matrix
and
$\sigma\sb{j}$ are the Pauli matrices:
$\sigma\sb{1}=\left(\begin{matrix} 0&1 \\ 1&0\end{matrix}\right)$,
$\sigma\sb{2}=\left(\begin{matrix} 0&-i \\ i&0\end{matrix}\right)$,
$\sigma\sb{3}=\left(\begin{matrix} 1&0 \\ 0&-1\end{matrix}\right)$.
After introduction of a space-time splitting,
the system  \eqref{dirac-maxwell-classical}
takes the form
\begin{eqnarray}\label{stationary-eqns}
\begin{cases}
i\p\sb t\psi
=\bm\alpha\cdot(-i\bm\nabla-\ech \mathbf{A})\psi
+m\beta\psi
+\ech A\sp 0\psi,
\\
(\p\sb t^2-\Delta)A\sp 0=\ech\psi\sp\ast\psi,
\\
(\p\sb t^2-\Delta)\mathbf{A}=\ech\psi\sp\ast\bm\alpha\psi.
\end{cases}
\end{eqnarray}
Above,
$\bm\alpha=(\alpha\sp 1,\alpha\sp 2,\alpha\sp 3)$;
$\alpha\sp j$ and $\beta$ are the $4\times 4$ 
Dirac matrices:
\begin{equation}
\alpha\sp j
=\left(
\begin{matrix} 0&\sigma\sb{j} \\ \sigma\sb{j}&0\end{matrix}
\right),
\qquad
\beta
=\left(
\begin{matrix} I\sb{2}&0 \\ 0&-I\sb{2}\end{matrix}
\right)\,.
\end{equation}
The $\alpha$-matrices and $\gamma$-matrices are related by
\[
\gamma\sp j=\beta\alpha\sp j,
\quad
1\le j\le 3;
\qquad
\gamma\sp{0}=\beta.
\]

Numerical evidence for the existence of
solitary wave solutions
to the Dirac--Maxwell system \eqref{dirac-maxwell-classical}
was obtained in \cite{wakano-1966}
and then in \cite{MR1364144},
where it was suggested that
such solutions are produced
by the Coulomb repulsion
from the negative part of the essential spectrum
(the Klein paradox).
The numerical results of \cite{MR1364144}
showed that
the Dirac--Maxwell system
has infinitely many families of
\emph{solitary waves}
$\phi\sb{N}(x,\omega)e^{-i\omega t}$,
$\omega\gtrsim -m$.
Here the nonnegative integer $N$
denotes the number of nodes of the positronic component
of the solution
(number of zeros of the corresponding
spherically symmetric
solution to the Choquard equation; see
\S\ref{sect-nr}).
A variational proof of existence of solitary waves
for $\omega\in(-m,0)$
and with $N=0$
first appeared in
\cite{MR1386737},
and the generalization to handle $\omega\in(-m,m)$
is in \cite{MR1618672}.
%

In the present paper, we give a proof of existence
of solitary wave solutions to the Dirac--Maxwell system
based on the perturbation from the nonrelativistic limit
and also
obtain the precise asymptotics for the solution in this limit.

The solitary wave solution
$(\phi(x) e^{-i\omega t},A\sp\mu(x))$
satisfies the stationary system
\begin{equation}\label{omega-phi-is}
\omega\phi
=\bm\alpha\cdot(-i\bm\nabla-\ech \mathbf{A})\phi+m\beta\phi+\ech A\sp 0\phi,
\qquad
-\Delta A\sp\mu=\ech\bar\phi\gamma\sp\mu\phi.
\end{equation}

\begin{theorem}\label{theorem-sw-dm}
There exists ${\omega\sb\ast}>-m$ such that
for $\omega\in(-m,{\omega\sb\ast})$
there is a solution to
\eqref{omega-phi-is}
of the form
\[
\phi(x,\omega)=\begin{bmatrix}
\epsilon^3\varPhi\sb 1(\epsilon x,\epsilon)
\\
\epsilon^2\varPhi\sb 2(\epsilon x,\epsilon)
\end{bmatrix},
\qquad
\epsilon=\sqrt{m^2-\omega^2},
\]
with
\[
\varPhi
=\begin{bmatrix}\varPhi\sb 1\\\varPhi\sb 2\end{bmatrix}
\in C^\infty\bigl(
(0,{\epsilon\sb\ast})\,;\,
H^2(\R^3;\C^2)\oplus H^2(\R^3;\C^2)
\bigr),
\qquad
{\epsilon\sb\ast}=\sqrt{m^2-{\omega\sb\ast}^2},
\]
and with
\[
A\sp\mu\in C^\infty\bigl(
(0,{\epsilon\sb\ast})\,;\, \dot H\sp 1(\R^3,\R)\cap L^\infty(\R^3,\R)\bigr),
\qquad
0\le \mu\le 3.
\]
Above,
$\dot H^1=\dot H^1(\R^3,\R)$ is the homogeneous Dirichlet
space of $L^6$ functions with the norm
\[
\|f\|_{\dot H^1}^2:=\int\sb{\R^3}|\nabla f|^2\,dx<\infty.
\]
For small $\epsilon>0$, one has
\[
\norm{{\varPhi\sb{1}}-\hat\varPhi\sb{1}}\sb{H\sp 2}
+
\norm{{\varPhi\sb{2}}-\hat\varPhi\sb{2}}\sb{H\sp 2}
=O(\epsilon^2),
\]
where
$\hat\varPhi\sb 1(y)$, $\hat\varPhi\sb 2(y)$
are of Schwartz class.
The solutions can be chosen so that
in the nonrelativistic limit $\epsilon=0$
one has
\begin{equation}
\hat\varPhi\sb 2(y)=\varphi_0(y)\bm{n},
\qquad
\hat\varPhi\sb 1(y)=\frac{i}{2m}
\bm\sigma\cdot\bm\nabla\hat\varPhi\sb 2(y),
\end{equation}
where $\bm{n}\in\C^2$, $\abs{\bm{n}}=1$,
and $\varphi_0\in\mathscr{S}(\R^3)$
is a strictly positive, spherically symmetric,
strictly monotonically decaying (as a function of $\abs{y}$)
solution of Schwartz class
to the Choquard equation
\begin{equation}\label{phi0}
-\frac{1}{2m}\varphi=-\frac{1}{2m}\Delta\varphi
-\Big(\frac{\ech^2}{4\pi\abs{\,\cdot\,}}\ast\varphi^2\Big)\varphi,
\qquad
\varphi(y)\in\R,
\quad
y\in\R^3.
\end{equation}
The Dirac field \(\phi(x,\omega) \) has exponential decay in $x$,
while the electromagnetic potential
satisfies
\[
A\sp 0(x,\omega)
\,=\,\ech\,\frac{\norm{\phi}_{L^2}^2}
              {4\pi|x|}\,+\,O(\langle x\rangle^{-2})\,,\qquad
 \mathbf{A}(x,\omega)\,=\,O(\langle x\rangle^{-2})             
\]
as \(|x|\to+\infty\,. \)
\end{theorem}

\begin{remark}
The existence of a positive spherically-symmetric solution
$\varphi_0\in\mathscr{S}(\R^3)$
to \eqref{phi0}
was proved in 
\cite{MR0471785}.
\end{remark}

To prove Theorem~\ref{theorem-sw-dm},
we construct solitary wave solutions by deforming
the solutions to the nonrelativistic limit
(represented by the Choquard equation)
via the implicit
function theorem.
Such a method was employed
in \cite{MR1750047,2008arXiv0812.2273G,MR3670258}
for the nonlinear Dirac equation
and in \cite{MR2593110,MR2647868,MR2671162}
for Einstein--Dirac and Einstein--Dirac--Maxwell systems.

One motivation for presenting here a new existence proof
for the Dirac--Maxwell solitary waves
is to realize mathematically
the physical intuition sketched in \cite{MR1364144}, which explains
the existence of these bound state solutions in terms of the Klein paradox
(see e.g. \cite[\S3.3]{MR0187641}).
Moreover, once one knows
that the excited eigenstates of the Choquard equation
are nondegenerate
(currently this nondegeneracy
is established only for the ground state, $N=0$ \cite{MR2561169}),
our argument will yield the existence
of excited
solitary wave solutions in the Dirac--Maxwell system,
extending the results of \cite{MR1386737}
to $N\ge 1$ (see Remark~\ref{remark-variational} below);
as mentioned in that article,
the variational methods are hard to
generalize to prove the existence of multiple solitary
waves for each $\omega$
in the Dirac--Maxwell system
(although such a multiplicity result has been obtained 
in \cite{MR1386737} for the Dirac--Klein--Gordon system).

Another motivation for the bifurcation approach
is that
having the nonrelativistic (or small-amplitude)
asymptotics of solitary waves
is the first step towards analyzing their stability.
Indeed, the physical significance
of solitary waves
requires not only existence but 
also stability, and it is to be hoped that the type of 
detailed information about the solutions
which is a consequence of the existence proof in this article, but
does not seem to be so easily accessible from the original variational 
constructions, will be helpful in future stability analysis
(see Remark~\ref{remark-stability} below).
In this context we mention some recent stability results for
the nonlinear Dirac equation.
Numerical results suggest that
the nonlinear Dirac equation with scalar-type
self-interaction, known as the Soler model,
possesses solitary waves which are spectrally stable,
that is,
the linearization at the solitary wave
has purely imaginary spectrum.
Numerics indicate that all solitary waves
in the cubic Soler model in one spatial dimension
(known as Gross--Neveu model)
are spectrally
stable, except perhaps for $\omega$ very close to $\omega=0$
\cite{MR2892774}.
Numerically,
for the Soler model with cubic nonlinearity
in the two-dimensional case,
there exists $\omega_0\in(0,m)$
such that solitary waves with
$\omega\in(\omega_0,m)$ are spectrally stable,
and the same is expected in the three-dimensional case
with $\omega\lesssim \omega_{\ast}\approx 0.936 m$
\cite{PhysRevLett.116.214101}.
More general results on the spectral stability
(that is, absence of linear instability)
in the nonlinear Dirac equation
have been obtained in \cite{MR3311594}
(possibility of bifurcations of nonzero-real-part eigenvalues
from the origin)
and in \cite{MR3530581}
(possibility of bifurcations of nonzero-real-part eigenvalues
from the essential spectrum).
The spectral stability of small amplitude solitary waves
(that is, solitary waves in the nonrelativistic limit)
in the charge-subcritical and charge-critical cases,
when the nonlinear term is $\abs{\psi\sp\ast\beta\psi}^k\beta\psi$
with $k\le 2/n$,
where $n\ge 1$ is the spatial dimension
(under the technical assumption
that $k>k_n$, with some $k_n\in(0,2/n)$)
was proved in \cite{linear-b}.
The linear instability of small amplitude solitary waves
in charge-supercritical case
$k>2/n$ (with $k<2/(n-2)$ if $n\ge 3$)
was shown in \cite{MR3208458}
(the general case of non-integer $k$
also needs the proof of
existence of solitary waves from \cite{MR3670258}).
The asymptotic stability of solitary waves
in the one-dimensional Soler model with respect to
``radially symmetric'' perturbations
has been proved in \cite{gn-stability}.

Here is the plan of the paper.
We give the heuristics and expected nonrelativistic scaling in \S\ref{sect-heur}.
The Choquard equation,
which is the nonrelativistic limit of the Dirac--Maxwell system,
is considered in \S\ref{sect-nr}.
In \S\ref{sect-exist}, we complete
the proof of existence of solitary waves
via the implicit function theorem.

\section{Heuristics on the nonrelativistic limit}
\label{sect-heur}
The small amplitude waves constructed in
Theorem~\ref{theorem-sw-dm}
are best understood physically  in terms of the non-relativistic limit. Since we
have set the speed of light and other physical constants equal to one, the 
relevant small parameter is the excitation energy (or frequency) as compared 
to the mass $m$.
To develop some preliminary intuition regarding the non-relativistic limit, 
following \cite{MR1364144}, we neglect the magnetic field
described by the vector-potential $A\sp j$,
getting
\[
i\p\sb t\psi=-i\bm\alpha{\cdot\bm\nabla}\psi
+m\beta\psi
+\ech A\sp 0\psi,
\qquad
(\p\sb t^2-\Delta)A\sp 0=\ech\psi\sp\ast\psi\,.
\]
We consider a family of solitary waves
\[
\psi(x,t)=\phi(x,\omega)e^{-i\omega t},
\qquad
\omega\in\R,
\]
with
$
\phi(x,\omega)
=
\begin{bmatrix}
\phi\sb 1(x,\omega)\\\phi\sb 2(x,\omega)
\end{bmatrix}\in\C^4
$,
where
$\phi\sb 1(x,\omega),\,\phi\sb 2(x,\omega)\in\C^2$
and $A\sp 0(x,\omega)\in\R$.
Then $\phi\sb 1$, $\phi\sb 2$, and $A\sp 0$ satisfy
\begin{eqnarray}
\label{cs}
\begin{cases}
(\omega-m)\phi\sb 1
=-i\bm\sigma{\cdot\bm\nabla}\phi\sb 2
+\ech A\sp 0\phi\sb 1,
\\
(m+\omega)\phi\sb 2
=-i\bm\sigma{\cdot\bm\nabla}\phi\sb 1
+\ech A\sp 0\phi\sb 2,
\\
-\Delta A\sp 0=\ech
(\phi\sb 1\sp\ast\phi\sb 1+\phi\sb 2\sp\ast\phi\sb 2)\,,
\end{cases}
\end{eqnarray}
where $\bm\sigma=(\sigma\sb 1,\sigma\sb 2,\sigma\sb 3)$,
the vector formed from the Pauli matrices.
Let us try to find small amplitude solitary waves with $\omega\gtrsim -m$.
Then $A\sp 0$ is small and
$-2m\phi\sb 1\approx -i\bm\sigma{\cdot\bm\nabla}\phi\sb 2$,
\begin{eqnarray}\label{the-above}
-(m+\omega)\phi\sb 2
\approx
-\frac{1}{2m}\Delta\phi\sb 2
-\ech A\sp 0\phi\sb 2,
\qquad
-\Delta A\sp 0
\approx
\ech
(\phi\sb 1\sp\ast\phi\sb 1+\phi\sb 2\sp\ast\phi\sb 2).
\end{eqnarray}
We notice from the above system
that the effect of the electromagnetic interaction is now
attractive; this is because we are analyzing states which bifurcate from
the negative energy spectrum.
It can easily be seen that if the same
reasoning as above is applied when \(\omega\lesssim m \),
then it leads to an equation
with a repulsive interaction, as is normal in electrostatics. This
generation of an effective attraction out of negative energy states is one of
a number of curious phenomena which arise from attempts to treat the
Dirac equation as a single particle wave equation, collectively
referred to under the label ``Klein paradox''
\cite{sakurai1967advanced}.

Let
$\epsilon^2=m^2-\omega^2$, $0<\epsilon\ll m;\,$
then \eqref{the-above} suggests the following scaling:
\begin{eqnarray}\label{scaling}
&
y=\epsilon x,
\qquad
\p\sb x=\epsilon\p\sb y,
\qquad
A\sp 0(x,\omega)=\epsilon^2 \eurA\sp 0(\epsilon x,\epsilon),
\nonumber
\\
&
\phi\sb 1(x,\omega)=\epsilon^{3}\varPhi\sb 1(\epsilon x,\epsilon),
\qquad
\phi\sb 2(x,\omega)=\epsilon^{2}\varPhi\sb 2(\epsilon x,\epsilon).
\end{eqnarray}
Note that
while $\phi\sb j$ and $A\sp 0$ depend on $\omega$ and $x$,
it is convenient to consider the scaled
functions
$\eurA\sp 0$ and $\varPhi_j$ as functions of
$y=\epsilon x$ and $\epsilon=\sqrt{m^2-\omega^2}$.
In the limit $\epsilon\to 0$,
denoting
\[
\hat\varPhi=\lim\sb{\epsilon\to 0}\varPhi,
\qquad
\hat\eurA\sp 0=\lim\sb{\epsilon\to 0}\eurA\sp 0,
\]
we arrive at the system
\begin{equation}\label{phi1phi2}
\begin{cases}
-2m\hat\varPhi\sb 1
=-i\bm\sigma{\cdot\bm\nabla}\sb{y}\hat\varPhi\sb 2,
\\
-\frac{1}{2m}\hat\varPhi\sb 2
=-i\bm\sigma{\cdot\bm\nabla}\sb{y}\hat\varPhi\sb 1
+\ech \hat\eurA\sp 0\hat\varPhi\sb 2,
\\
-\Delta\sb y \hat\eurA\sp 0=\ech\hat\varPhi\sb 2\sp\ast\hat\varPhi\sb 2\,,
\end{cases}
\end{equation}
which
can be rewritten as the following equation for $\hat\varPhi_2(y)$
and $\hat\eurA\sp 0(y)$ only:
\begin{equation}\label{phi2}
-\frac{1}{2m}\hat\varPhi\sb 2
=-\frac{1}{2m}\Delta\sb y\hat\varPhi\sb 2
-\ech \hat\eurA\sp 0\hat\varPhi\sb 2,
\qquad
-\Delta\sb y \hat\eurA\sp 0=\ech\hat\varPhi\sb 2\sp\ast\hat\varPhi\sb 2\,,
\end{equation}
with the understanding that $\hat\varPhi_1(y)$
is then obtained from the first equation
of \eqref{phi1phi2}.

\begin{remark}
Regarding self-consistency of this approximation: one can check that,
when using the scaling \eqref{scaling},
the magnetic field vanishes to higher order
in the limit $\epsilon\to 0$,
in agreement with \cite{MR1364144}.
Indeed,
$
\mathbf{A}
=-\Delta^{-1}\mathbf{J},
$
where
$\mathbf{J}=\ech\psi\sp\ast\bm\alpha\psi=O(\epsilon^5)$,
hence $\mathbf{A}=-\Delta^{-1}\mathbf{J}=O(\epsilon^3)$.
The second equation from \eqref{cs}
would then take the form
\[
(m+\omega)\phi\sb 2
=-i\bm\sigma{\cdot\bm\nabla}\phi\sb 1-\ech {\mathbf{A}\cdot}\bm\sigma\phi\sb 1
+\ech A\sp 0\phi\sb 2,
\]
where
${\mathbf{A}\cdot}\bm\sigma\phi\sb 1=O(\epsilon^6)$
while other terms are $O(\epsilon^4)$. Thus the approximation is
at least formally self-consistent; the analysis in \S\ref{sect-exist}
makes this rigorous. 
\end{remark}

\begin{remark}
Regarding the symmetry: while
it is clear that radial symmetry
of both $\phi_1$ and $\phi_2$ is inconsistent with \eqref{phi1phi2},
solutions of the form given in \cite{wakano-1966},
\begin{equation}\label{ansatz}
\phi\sp{\mathrm{I}}(x)=
\begin{bmatrix}
g(r)\begin{bmatrix}
1\\0
\end{bmatrix}
\\
i f(r)
\begin{bmatrix}\cos\theta\\ e^{i\upphi}\sin\theta\end{bmatrix}
\end{bmatrix},
\qquad
\phi\sp{\mathrm{II}}(x)=
\begin{bmatrix}
f(r)
\begin{bmatrix}\cos\theta\\ e^{i\upphi}\sin\theta\end{bmatrix}
\\
i g(r)\begin{bmatrix}
1\\0
\end{bmatrix}
\end{bmatrix},
\end{equation}
are permitted in principle,
suggesting that in the non-relativistic limit $\hat\varPhi_2$ 
could be radial, or, to be more precise,
of the form
$
\hat \varPhi\sb 2(y)
=\varphi(y)\begin{bmatrix}1\\0\end{bmatrix}\in\C^2,
$
where
the spherically symmetric function
$\varphi(y)\in\C$, $y\in\R^3$ is to satisfy
\begin{equation}\label{phiphi}
-\frac{1}{2m}\varphi
=-\frac{1}{2m}\Delta\sb y\varphi
-\ech \hat\eurA\sp 0\varphi,
\qquad
-\Delta\sb y \hat\eurA\sp 0=\ech|\varphi|^2\,.
\end{equation}
The starting point for our perturbative construction of solitary
wave solutions to \eqref{stationary-eqns} is indeed 
a radial solution of \eqref{phiphi}, although the exact form of these
solitary waves
has to be modified from \eqref{ansatz} when the 
effect of the magnetic field 
$\mathbf{B}=\nabla\times\mathbf{A}$
is included; see \cite[\S5]{MR1364144}. The method of proof
we employ does not require any particular symmetry class of the
solitary wave.
\end{remark}

The above discussion suggests  that the system \eqref{phiphi} 
determines the non-relativistic limit in the leading order.
The system \eqref{phiphi} describes a Schr\"odinger wave function with an
attractive self-interaction determined by the Poisson equation.
Since the sign of the interaction is attractive,
\eqref{phiphi} is often referred to as the stationary Newton--Schr\"odinger system.
It is equivalent to a nonlocal equation for $\varphi$ known as the Choquard
equation, which is the subject of the next section.

\section{The nonrelativistic limit: the Choquard equation}
\label{sect-nr}

One arrives at the system \eqref{phiphi}
when looking for solitary wave solutions in the system
\begin{equation}\label{nrl}
\begin{cases}
i\p\sb t\psi=-\frac{1}{2m}\Delta\psi
-\ech V\psi,
\\
-\Delta V=\ech\psi\sp\ast\psi,
\end{cases}
\qquad
\psi(x,t)\in\C,
\qquad
V(x,t)\in\R,
\qquad
x\in\R^3.
\end{equation}
This is the time-dependent Newton--Schr\"odinger system.
If $\big(\varphi(x,\omega) e^{-i\omega t},V(x,\omega)\big)$
is a solitary wave solution, then
$\varphi$ and $V$ satisfy the stationary system
\begin{equation}\label{nss}
\omega\varphi=-\frac{1}{2m}\Delta\varphi-\ech V\varphi,
\qquad
-\Delta V=\ech\abs{\varphi}^2.
\end{equation}
We rewrite the system \eqref{nrl} in the non-local form,
which is known as the {\em Choquard equation} \cite{MR0471785}:
\begin{equation}\label{choquard}
i\p\sb t\psi=-\frac{1}{2m}\Delta\psi+\ech^2\Delta^{-1}(\abs{\psi}^2)\psi,
\qquad
\psi(x,t)\in\C,
\qquad
x\in\R^3,
\end{equation}
where
$\Delta^{-1}$ is the operator of convolution with $-\frac{1}{4\pi \abs{x}}$.
The solitary waves are solutions of the form
$
\psi(x,t)=\varphi(x,\omega)e^{-i\omega t},
$
with $\varphi$ satisfying the non-local scalar equation
\begin{equation}\label{lambda-phi}
\omega\varphi
=-\frac{1}{2m}\Delta \varphi+\ech^2\Delta^{-1}(\abs{\varphi}^2)\varphi\,.
\end{equation}
This suggests the following variational formulation for the problem:
find critical points of
\begin{equation}\label{def-E-Choquard}
{E}(\varphi)
=\frac{1}{2m}\int\sb{\R^3} |\nabla\varphi|^2\,dx
-\frac{\ech^2}{8\pi}\int\sb{\R^3\times\R^3}
\frac{|\varphi(x)|^2|\varphi(y)|^2}{|x-y|}\,dx\,dy\,,
\end{equation}
subject to the constraint ${{Q}}(\varphi)=\const$, with
the charge functional defined by
\begin{eqnarray}\label{def-q0}
{{Q}}(\varphi)=\int\sb{\R^3}\abs{\varphi(x)}^2\,dx.
\end{eqnarray}
This formulation
is the basis of the existence and uniqueness proofs in the references
which are summarized in the following theorem.

\begin{lemma}[\cite{MR0471785,MR591299,MR2592284}]
For all $\omega<0$
and $N\in\Z$, $N\ge 0$,
the equation \eqref{choquard}
admits solitary wave solutions
\[
\psi(x,t)=\varphi_N(x,\omega)e^{-i\omega t},
\qquad
\lim\sb{\abs{x}\to\infty}\varphi_N(x,\omega)
=0,
\]
with $\varphi_N(x,\omega)$
a spherically symmetric solution of
\eqref{lambda-phi}.
These solutions
differ by the number $N$ of zeros (or nodes),
of the profile functions
$\varphi_N(x,\omega)$,
considered as functions of $r=\abs{x}$.
The profile function $\varphi_0$ 
with no zeros minimizes the value of the energy functional
${E}(\varphi)$
amongst functions
with fixed $L^2$ norm,
and is the unique (up to translation) positive $H^1$ solution of 
\eqref{lambda-phi}; the corresponding solitary wave
is called the \emph{ground state}.
\end{lemma}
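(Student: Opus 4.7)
The plan is to treat the ground state ($N=0$) by constrained variational minimization and the excited states ($N\geq 1$) by a shooting argument in the radial reduction; the uniqueness of the positive $H^1$ solution rests on a rearrangement argument. First I would construct $\varphi_0$ by minimizing $E_{\mathrm{Choquard}}(\phi)$ on the sphere $\{\phi\in H^1(\R^3):\|\phi\|_{L^2}^2=c\}$ for a fixed $c>0$. The Hardy--Littlewood--Sobolev inequality bounds the Coulomb term by $C\|\phi\|_{L^{12/5}}^4$, and Gagliardo--Nirenberg interpolation estimates this in turn by $C\|\phi\|_{L^2}^{5/2}\|\nabla\phi\|_{L^2}^{3/2}$; since the gradient exponent $3/2$ is strictly below $2$, the functional is bounded below and coercive on the constraint sphere. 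Schwarz symmetrization of a minimizing sequence preserves the $L^2$ constraint and decreases neither the Dirichlet energy (P\'olya--Szeg\H{o}) nor the Coulomb energy (Riesz rearrangement), so one may assume the sequence is radial and nonnegative; the compact embedding $H^1_{\mathrm{rad}}(\R^3)\hookrightarrow L^p(\R^3)$ for $2<p<6$ then extracts a minimizer $\varphi_0\geq 0$. The Lagrange multiplier delivers \eqref{lambda-phi} for some frequency $\omega<0$ (the sign being fixed by pairing the equation with $\varphi_0$), the strong maximum principle upgrades $\varphi_0\geq 0$ to $\varphi_0>0$, and the scaling $\varphi(x)\mapsto \lambda^{-2}\varphi(x/\lambda)$, $\omega\mapsto \lambda^{-2}\omega$ sweeps out all $\omega<0$.

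For $N\geq 1$, I would pass to the radial reduction by introducing the auxiliary potential $V=-\Delta^{-1}\varphi^2$, giving the ODE system
\[
-\frac{1}{2m}\bigl(\varphi''+\tfrac{2}{r}\varphi'\bigr)=\omega\varphi+e^2V\varphi,\qquad -\bigl(V''+\tfrac{2}{r}V'\bigr)=\varphi^2,
\]
with $\varphi'(0)=V'(0)=0$ and $\varphi,V\to 0$ at infinity. For fixed $\omega<0$ this becomes a two-parameter shooting problem in $(\varphi(0),V(0))$; Sturm-type comparison, together with the global a priori bounds needed to rule out finite-time singularities, would produce initial data realizing every prescribed node count $N\geq 0$ for $\varphi$ on $(0,\infty)$. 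Alternatively one could invoke Lyusternik--Schnirelman category in the radial subspace to obtain infinitely many critical points of the constrained functional, but the shooting route gives sharper control on the nodal structure.

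The subtlest claim is the uniqueness of the positive solution up to translation. This cannot be read off from ODE uniqueness alone, because the Coulomb interaction is nonlocal and standard ODE Picard arguments do not apply to the full equation. Instead I would use Lieb's strict rearrangement argument: the equality case of the Riesz rearrangement inequality forces any positive $H^1$ solution to equal its Schwarz symmetrization about some point, so it is a radial positive solution; a further scaling/rearrangement comparison identifies it as the ground state minimizer on its own $L^2$ level, hence as $\varphi_0$ up to translation. I expect this rearrangement-uniqueness step to be the hardest part of a self-contained proof, as it relies on the strict inequality cases of the symmetrization theorems rather than on the local structure of the equation.
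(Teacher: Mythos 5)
First, a point of comparison: the paper does not prove this lemma at all --- it is stated as a summary of results taken from Lieb, Lions, and Ma--Zhao, so your sketch has to be measured against the literature rather than against an argument in the text. Your treatment of the ground state is the standard one (constrained minimization, symmetrization, radial compactness) and is sound in outline, modulo two repairs: the correct Gagliardo--Nirenberg bookkeeping gives $\iint\frac{|\phi(x)|^2|\phi(y)|^2}{|x-y|}\,dx\,dy\le C\|\phi\|_{L^{12/5}}^4\le C\|\nabla\phi\|_{L^2}\|\phi\|_{L^2}^{3}$ (gradient exponent $1$, not $3/2$; the conclusion that it is subquadratic, hence coercivity, survives), and after extracting a weak radial limit you must still exclude loss of $L^2$ mass at infinity, e.g.\ via the strict scaling relation $I_\mu=\mu^3 I_1<0$ for the constrained infimum \eqref{e-is-i}. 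For the excited states your shooting scheme is a plausible plan but only a plan; the construction actually cited (Lions) is a min-max/Lyusternik--Schnirelman argument in $H^1_{\mathrm{rad}}$, and the assertion that the $N$-th solution has exactly $N$ nodes is itself a delicate point, not a free consequence of either route.

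The genuine gap is in the uniqueness step. The lemma asserts that $\varphi_0$ is, up to translation, the unique positive $H^1$ solution of \eqref{lambda-phi} --- uniqueness among \emph{all} positive solutions, not merely among minimizers. The equality case of the Riesz rearrangement inequality, which you propose as the engine, only bites when you can compare the value of the functional at $\phi$ with its value at the symmetrization $\phi^\ast$, i.e.\ when $\phi$ is already known to be a minimizer; an arbitrary positive solution of the Euler--Lagrange equation carries no such variational information, so the rearrangement argument cannot get started. What is actually required is (i) a proof that every positive $H^1$ solution is radially symmetric and decreasing about some point --- this is the content of Ma--Zhao \cite{MR2592284}, obtained by the method of moving planes in integral form applied to the coupled system for $(\varphi,V)$ --- followed by (ii) uniqueness of the positive radial solution, which in Lieb \cite{MR0471785} is a separate scaling-and-comparison argument exploiting the monotone structure of the radial Newtonian potential, again not a rearrangement argument. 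Your sentence ``a further scaling/rearrangement comparison identifies it as the ground state minimizer on its own $L^2$ level'' conceals exactly these two missing steps, and as written the step would fail.
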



\begin{remark}\label{remark-variational}
Together with the heuristics in the previous section,
the above result suggests that
for $\omega$ sufficiently close to $-m$
there might exist
infinitely many
families
of solitary waves
to the Dirac--Maxwell system,
which differ by the number of nodes
of the \emph{positronic} component
(two lower components of $\phi$).
\end{remark}

\begin{remark}\label{remark-scaling}
The $\varphi(x,\omega)$ and $V(x,\omega)$ for different
values of $\omega<0$ can be scaled to produce a standard form
as follows. Let $\zeta>0$ be such that
$\omega=-\zeta^2$ and write
$y=\zeta x\,,$ $\varphi(x,\omega)=\zeta^2 u(\zeta x)\,,$
and $V(x,\omega)=\zeta^2 v(\zeta x)$. Then \eqref{nss}
is equivalent to  the following
system for $u(y)$, $v(y)$:
\begin{equation}\label{Phi-W}
-u=-\frac{1}{2m}\Delta\sb y u-\ech v u,
\qquad
-\Delta\sb y v=\ech\abs{u}^2\,.
\end{equation}
\end{remark}


In the remainder of this 
section we summarize the properties of the linearized
Choquard equation which follow from \cite{MR2561169} and
are needed in \S\ref{sect-exist}.
Consider a solution to the
Choquard equation of the form
\[
\psi(x,t)=\big(\varphi_0(x)+R(x,t)+i S(x,t)\big)e^{-i{\omega\sb 0} t},
\]
with $R(x,t)$, $S(x,t)$ real-valued.
The linearized equation for $R$, $S$
is:
\begin{eqnarray}\label{lin}
&
\p\sb t
\begin{bmatrix}R\\S\end{bmatrix}
=
\begin{bmatrix}0&L\sb 0\\-L\sb 1& 0\end{bmatrix}
\begin{bmatrix}R\\S\end{bmatrix},
\end{eqnarray}
where
\begin{eqnarray}
\label{def-l0-l1}
L\sb 0=-\frac{1}{2m}\Delta-{\omega\sb 0}
+\ech^2\Delta^{-1}(\varphi_0^2),
\qquad
L\sb 1=L\sb 0+2\ech^2\Delta^{-1}(\varphi_0\,\cdot\,)\varphi_0.
\end{eqnarray}
Notice that
$L\sb 1=\frac{1}{2}\bigl(E''(\varphi_0)-{\omega\sb 0} {{Q}}''(\varphi_0)
\bigr)$,
with $E(\varphi)$
from \eqref{def-E-Choquard}.
Both $L\sb 0$ and $L\sb 1$ are unbounded operators
$L^2(\R^3)\to L^2(\R^3)$
which are self-adjoint with domain
$H^2(\R^3)\subset L^2(\R^3)$.

\begin{lemma}\label{lemma-l0-l1}
The self-adjoint operator $L\sb 0:H^2\to L^2$
is positive-definite,
with $0\in\sigma\sb d(L\sb 0)$
a simple eigenvalue
corresponding to a positive eigenfunction
$\varphi_0$.
The range of
$L\sb 0$ is $\{\varphi_0\}^{\perp}$, the $L^2$-orthogonal complement of
the linear span of $\varphi_0$.

The self-adjoint operator $L\sb 1:H^2\to L^2$
has exactly one negative eigenvalue,
which we denote $-\Lambda\sb 0$, and
has a three-dimensional kernel $\ker L\sb 1$ 
spanned by $\{\partial_j \varphi_0\}_{j=1}^3$.
The range of
$L\sb 1$ is $(\ker L\sb 1)^{\perp}$, the $L^2$-orthogonal complement of
the linear span of the $\{\partial_j \varphi_0\}_{j=1}^3$.
\end{lemma}

\begin{proof}
Clearly $L\sb 0 \varphi_0=0$;
since $\varphi_0$ is positive,
it follows that $0$ is the lowest eigenvalue of $L\sb 0$
(which is thus non-degenerate),
with the rest of the spectrum separated from zero.

Now we focus on $L_1$;
we proceed similarly to \cite[Lemma 5.4.3]{kikuchi-thesis}.
The $N=0$ ground state solution $\varphi_0$ to \eqref{lambda-phi}
is characterized in \cite{MR0471785} as the solution,
unique up to translation and
phase rotation,
to the following minimization problem:
\begin{equation}\label{e-is-i}
E(\varphi_0)=I\sb\mu:=
\inf\big\{E({\varphi})\sothat\ {\varphi}\in H\sp 1(\R^3),
\ \norm{{\varphi}}\sb{L\sp 2}^2=\mu
\big\},
\end{equation}
for certain $\mu>0$;
above, $E(\varphi)$ is from \eqref{def-E-Choquard}.
We claim that this implies that $L\sb 1\geq 0$ on
$\{\varphi_0\}^{\perp}$. Indeed,
let $\norm{v}\sb{L\sp 2}=\norm{\varphi_0}\sb{L\sp 2}$,
$\langle v,\varphi_0\rangle=0$.
For $s\in(-1,1)$,
define ${\varphi}\sb s=(1-s^2)^{1/2}\varphi_0+s v$,
so that ${{Q}}({\varphi}\sb s)={{Q}}(\varphi_0)$.
Calculating that ${\varphi}\sb s\at{s=0}=\varphi_0$,
$\p\sb s\at{s=0}{\varphi}\sb s=v$,
$\p\sb s^2\at{s=0}{\varphi}\sb s=-\varphi_0$ we deduce from
\eqref{e-is-i}:
\begin{eqnarray}
\label{l1-positive}
&&
\hskip -0.3cm
0\le\p\sb s^2\at{s=0}E({\varphi}\sb s)
=\langle E'(\varphi_0),-\varphi_0\rangle
+\langle E''(\varphi_0)v,v\rangle
\\
\nonumber
&&
=
-{\omega\sb 0}\langle {{Q}}'(\varphi_0),\varphi_0\rangle
+\langle E''(\varphi_0)v,v\rangle
=
\langle v,(E''-{\omega\sb 0} {{Q}}'')v\rangle
=\,2\,
\langle v,L_1 v\rangle,
\end{eqnarray}
establishing the claim.
We took into account that
$\varphi_0$ satisfies
the stationary equation
$E'(\varphi_0)={\omega\sb 0} {{Q}}'(\varphi_0)$
and also that
\[
\langle {{Q}}'(\varphi_0),\varphi_0\rangle
=2\norm{\varphi_0}\sb{L\sp 2}^2
=2\norm{v}\sb{L\sp 2}^2
=\langle {{Q}}'(v),v\rangle
=\langle {{Q}}''v,v\rangle.
\]
So $L\sb 1$ is non-negative on a codimension one subspace.
On the other hand,
since the integral kernel of $\Delta^{-1}$
is strictly negative, while $\varphi_0$
is strictly positive and $L\sb 0 \varphi_0=0$, it follows that
$\langle \varphi_0,L\sb 1\,\varphi_0\rangle<0$ so that there
certainly exists one negative eigenvalue characterized as
\[
-\Lambda\sb 0
:=\inf\big\{\langle {v},L\sb 1 {v}\rangle\sothat\ \norm{{v}}\sb{L\sp 2}=1
\big\}<0.
\]
Let ${\eta\sb 0}$ be the corresponding eigenfunction,
$L\sb 1 {\eta\sb 0}=-\Lambda\sb 0 {\eta\sb 0}$. To prove that
$(-\Lambda\sb 0,0)\subset\rho(L\sb 1)$, which is the resolvent set of $L\sb 1$,
consider the minimization problem
\begin{equation}\label{mu-positive}
\inf\big\{
\langle v,L\sb 1 v\rangle\sothat\ \norm{v}^2=1,
\ \langle \eta\sb 0, v\rangle=0
\big\}\, .
\end{equation}
The relation
$L\sb 0 \varphi_0=0$, together with translation invariance,
implies that
$L\sb 1\p\sb {j}\varphi_0=0$.
Moreover,
it is proved in \cite{MR2561169} that
$\varphi_0$ is nondegenerate,
in the sense that
the kernel of $L\sb 1$ is spanned by the $\p\sb {j}\varphi_0$,
$1\le j\le 3$.
Hence,
by consideration of linear combinations of the eigenfunctions $\eta_0$ 
and $\p\sb {j}\varphi_0$, we conclude that
the value defined by \eqref{mu-positive} is $\leq 0$. In fact it must
equal zero since if it were negative a simple compactness argument of
the type appearing in \cite[Proof of Proposition 2.9]{MR783974},
based on the negativity of ${\omega\sb 0}$, would imply the existence
of a negative eigenvalue in the interval $(-\Lambda_0,0)$ and with 
the corresponding eigenfunction ${\eta\sb 1}$ orthogonal to ${\eta\sb 0}$.
But since
${\eta\sb 0},{\eta\sb 1}$ would then be an orthogonal pair of
eigenfunctions of $L\sb 1$ with negative eigenvalues, there would necessarily exist
some non-trivial linear combination of them having zero inner product with
$\varphi_0$, contradicting the fact that $L\sb 1$
is non-negative on $\{\varphi_0\}^{\perp}$ (cf. \eqref{l1-positive}).
\end{proof}

We will also need the following bounds for the inverses of \(L\sb 0 \) and
\(L\sb 1 \).
\begin{corollary}\label{bounds}
  \(L\sb 0^{-1} \) is a bounded operator $\{\varphi_0\}^{\perp}\cap L^2\to H^2$, while
  \(L\sb 1^{-1} \) is a bounded operator $(\ker L\sb 1)^{\perp}\cap L^2\to H^2$.
Also, in terms of the exponentially weighted Sobolev spaces \(H^{s,\theta}(\R^3) \),
with $s\in\N_0=\{0,1,2\dots\}$ and $\theta\ge 0$,
with the norms
\begin{eqnarray}\label{def-h-theta}
\|u\|_{H^{s,\theta}}\,=\,
\sum_{\alpha\in\N_0^3,\,\abs{\alpha}\le s}
\,\|e^{\theta |x|}\partial_x^\alpha u\|_{L^2(dx)}\,,
\end{eqnarray}
the mappings
\begin{eqnarray}
\nonumber
&&
L\sb 0^{-1}:\;\{\varphi_0\}^{\perp}\cap
  H^{s,\theta}\to H^{s+2,\theta}
\\
\nonumber
&&
L\sb 1^{-1}:\;(\ker L\sb 1)^{\perp}\cap
  H^{s,\theta}\to H^{s+2,\theta}\,
\end{eqnarray}
are bounded for \(\theta<|\omega_0|\,. \)

\end{corollary}

We conclude with a few remarks on the stability
of solitary waves to the Choquard equation.
By Remark~\ref{remark-scaling} we know the
$\omega$-dependence of
a localized solution
$\varphi(x,\omega)e^{-i\omega t}$
to \eqref{choquard}:
one has
$\varphi(x,\omega)=\zeta^2 u(\zeta\abs{x})$,
where $\zeta=\sqrt{-\omega}$. From this
we can obtain how the charge depends on the frequency $\omega<0$:
\[
{{Q}}(\omega)
=\int\limits\sb{\R^3}\abs{\varphi(x,\omega)}^2\,dx
=\zeta^4\int\limits\sb{\R^3}\abs{u(\zeta x)}^2\,dx
=\zeta\int\limits\sb{\R^3}\abs{u(y)}^2\,dy
=\abs{\omega}^{\frac 1 2}\int\limits\sb{\R^3}\abs{u(y)}^2\,dy.
\]
It follows that for all negative frequencies
one has
$
\frac{d}{d\omega}{{Q}}(\omega)<0\,.
$
By the Vakhitov--Kolokolov stability criterion \cite{VaKo},
this leads us to expect the spectral stability
of no-node solitary waves (the ground states)
in the Choquard equation.

\begin{proposition}\label{prop-stability-ns}
The ground state solitary wave
$\varphi_0(x)e^{-i{\omega\sb 0} t}$
of the Choquard equation \eqref{choquard}
is spectrally stable.
\end{proposition}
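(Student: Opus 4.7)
The plan is to apply the spectral form of the Vakhitov--Kolokolov criterion (in the Grillakis--Shatah--Strauss style), exploiting the Hamiltonian structure behind \eqref{lin}. I would write the linearized evolution as $\partial_t U=\mathcal{L}U$ with $U=(R,S)^\top$ and $\mathcal{L}=J\mathcal{H}$, $J=\bigl(\begin{smallmatrix}0 & I\\-I & 0\end{smallmatrix}\bigr)$, $\mathcal{H}=\mathrm{diag}(L_1,L_0)$. Linear stability then amounts to proving $\sigma(\mathcal{L})\subset i\mathbb{R}$, or equivalently $\sigma(L_0L_1)\subset[0,\infty)$.

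First I would enumerate the generalized null space of $\mathcal{L}$ inherited from the symmetries. Phase invariance gives $L_0\varphi_0=0$; translations give $L_1\partial_j\varphi_0=0$; differentiating the stationary equation $\omega\varphi_\omega=-\tfrac{1}{2m}\Delta\varphi_\omega+e^2\Delta^{-1}(\varphi_\omega^2)\varphi_\omega$ in $\omega$ at $\omega_0$ yields $L_1\partial_\omega\varphi_0=\varphi_0$; and a short computation using $\Delta(x_j\varphi_0)=x_j\Delta\varphi_0+2\partial_j\varphi_0$ gives $L_0(x_j\varphi_0)=-\tfrac{1}{m}\partial_j\varphi_0$ (the Galilean boost). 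Combining the $L_1$-identity with the explicit scaling $Q(\omega)=(-\omega)^{1/2}\int u^2\,dy$ from Remark~\ref{remark-scaling} produces the decisive inequality
\[
\langle L_1^{-1}\varphi_0,\varphi_0\rangle
=\langle\partial_\omega\varphi_0,\varphi_0\rangle
=\tfrac{1}{2}\tfrac{dQ}{d\omega}\Big|_{\omega_0}<0.
\]

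Second I would establish the Weinstein-type coercivity
\[
\langle L_1 R,R\rangle+\langle L_0 S,S\rangle\geq c\bigl(\|R\|_{L^2}^2+\|S\|_{L^2}^2\bigr)
\]
on the codimension-eight subspace $\mathcal{P}$ of pairs $(R,S)$ that are $L^2$-orthogonal to the eight generalized zero-modes identified above. The $L_0$ contribution is immediate from $L_0\ge 0$ with $\ker L_0=\mathbb{R}\varphi_0$. The $L_1$ contribution is the serious step: by the previous lemma $L_1$ has a single negative direction $\eta_0$ and three-dimensional kernel $\mathrm{span}\{\partial_j\varphi_0\}$, and a Lagrange-multiplier minimax computation shows that $\inf\{\langle L_1R,R\rangle\sothat \|R\|=1,\ R\perp\varphi_0,\ R\perp\partial_j\varphi_0\}>0$ \emph{if and only if} $\langle L_1^{-1}\varphi_0,\varphi_0\rangle<0$, which has just been verified.

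Third I would conclude. On $\mathcal{P}$ the Hessian $\mathcal{H}$ is positive definite, so $\langle\cdot,\cdot\rangle_{\mathcal{H}}:=\langle\mathcal{H}\cdot,\cdot\rangle$ is a genuine inner product; since $\mathcal{L}=J\mathcal{H}$ with $J^*=-J$, one checks directly that $\mathcal{L}$ is anti-self-adjoint with respect to $\langle\cdot,\cdot\rangle_{\mathcal{H}}$ on $\mathcal{P}$, and therefore its spectrum there lies on $i\mathbb{R}$. Off $\mathcal{P}$, $\mathcal{L}$ preserves the eight-dimensional generalized null space with Jordan chains of length at most two, producing only linear-in-$t$ modulational growth, which is not a genuine exponential instability. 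The hard part will be the Weinstein coercivity: converting the $L_1$-spectral question on the symplectically natural complement $\{\varphi_0,\partial_j\varphi_0\}^\perp$ into the computable sign condition $\langle L_1^{-1}\varphi_0,\varphi_0\rangle<0$ is what makes the Vakhitov--Kolokolov criterion effective, and it relies crucially on the nondegeneracy $\ker L_1=\mathrm{span}\{\partial_j\varphi_0\}$ from \cite{MR2561169}.
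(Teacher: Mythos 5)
Your proposal is correct in spirit but takes a genuinely different, and much heavier, route than the paper. The paper's proof is a direct spectral computation on \eqref{lin}: an eigenfunction $(R,S)$ with $\lambda\neq 0$ satisfies $-\lambda^2 R=L_0L_1R$ with $R\in\range L_0=\{\varphi_0\}^\perp$, whence $-\lambda^2\langle R,L_0^{-1}R\rangle=\langle R,L_1R\rangle$; since $L_0^{-1}>0$ on $\{\varphi_0\}^\perp$ and $L_1\ge 0$ there (read off directly from the constrained-minimizer characterization of $\varphi_0$, cf.\ \eqref{mu-positive}), one gets $\lambda^2\le 0$ in two lines. You instead run the full Grillakis--Shatah--Strauss/Weinstein machinery: generalized kernel, the Vakhitov--Kolokolov quantity $\langle L_1^{-1}\varphi_0,\varphi_0\rangle=\tfrac12\,dQ/d\omega<0$, coercivity of $\mathrm{diag}(L_1,L_0)$ on a complement, and anti-self-adjointness in the Hessian inner product. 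Your identities all check out ($L_1\partial_\omega\varphi_0=\varphi_0$, $L_0(x_j\varphi_0)=-\tfrac1m\partial_j\varphi_0$, $\langle\partial_\omega\varphi_0,\varphi_0\rangle=\tfrac12 Q'(\omega_0)<0$), and your route buys more: the coercivity estimate is precisely the input for \emph{nonlinear} orbital stability, which the paper only quotes from \cite{MR677997}. But it also costs more: your Weinstein step genuinely needs the nondegeneracy $\ker L_1=\mathrm{span}\{\partial_j\varphi_0\}$ from \cite{MR2561169}, whereas the paper's argument for this proposition needs only the elementary non-negativity of $L_1$ on $\{\varphi_0\}^\perp$.

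One concrete defect in your write-up: $\mathcal{P}$ should not be the $L^2$-orthogonal complement of the generalized kernel of $J\mathcal{H}$. That complement forces $R\perp\{\partial_j\varphi_0,\partial_\omega\varphi_0\}$ and $S\perp\{\varphi_0,x_j\varphi_0\}$, whereas the Weinstein lemma you invoke delivers positivity of $L_1$ only under the constraint $R\perp\varphi_0$ (together with $R\perp\ker L_1$); nothing guarantees $\langle L_1R,R\rangle\ge 0$ merely from $R\perp\partial_\omega\varphi_0$. Moreover the $L^2$-orthogonal complement of the generalized kernel of $J\mathcal{H}$ is not invariant under $J\mathcal{H}$ (the invariant subspace is the orthogonal complement of the generalized kernel of the adjoint $-\mathcal{H}J$). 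Both problems are cured simultaneously by taking the \emph{symplectic} orthogonal complement of the generalized kernel, i.e.\ the set of $(R,S)$ with $R\perp\{\varphi_0,\,x_j\varphi_0\}$ and $S\perp\{\partial_j\varphi_0,\,\partial_\omega\varphi_0\}$: this is $J\mathcal{H}$-invariant, it contains the constraint $R\perp\varphi_0$ needed for your coercivity step, and the residual kernel directions are excluded there because $\langle\partial_j\varphi_0,x_k\varphi_0\rangle=-\tfrac12\delta_{jk}\|\varphi_0\|_{L^2}^2\neq0$ and $\langle\varphi_0,\partial_\omega\varphi_0\rangle\neq0$. With that correction your argument goes through.
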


\begin{proof}
To determine the point spectrum of
$JL=
\begin{bmatrix}0&L\sb 0\\-L\sb 1&0\end{bmatrix}$
observe that if $\begin{bmatrix}R\\S\end{bmatrix}$ is an eigenfunction
corresponding to the eigenvalue $\lambda\in\C$,
then $-\lambda^2 R=L\sb 0 L\sb 1 R$.
If $\lambda\ne 0$,
then one concludes that
$R$ is orthogonal to $\ker L\sb 0$, which is the linear span of $\{\varphi_0\}$,
and hence we can apply $L\sb 0^{-1}$;
taking then the inner product with $R$,
we deduce that:
\[
-\lambda^2
\langle R,L\sb 0^{-1}R\rangle
=
\langle R,L\sb 1 R\rangle\,,
\]
which implies that $\lambda^2\in\R$.
Moreover,
since we already argued that \eqref{mu-positive} equals zero,
one has
$\lambda^2\le 0$,
which yields
$\sigma_d(JL)\subset i\R$
and hence the absence of
exponentially growing modes at the linearized level.
Let us mention that
the (nonlinear) orbital stability of the ground state solitary
wave solution to the Choquard equation
was proved in \cite{MR677997}.
\end{proof}

\begin{remark}\label{remark-stability}
In view of \cite{MR3208458,linear-b},
one expects that the spectral stability or linear instability
of small amplitude solitary waves
is directly related to the spectral stability or linear instability
of the corresponding nonrelativistic limit,
which for Dirac--Maxwell is given by the Choquard equation.
We hope that this may provide a route to understanding
stability of small solitary waves solutions for
the Dirac--Maxwell system.
\end{remark}

\section{Proof of existence of solitary waves in Dirac--Maxwell system}
\label{sect-exist}

In this section, we complete the proof of Theorem~\ref{theorem-sw-dm}.
It is
obtained as a consequence of Proposition~\ref{prop-sw-dm} after
the application of a rescaling motivated by the discussion in \S\ref{sect-nr}.

We write
$
\phi(x,\omega)
=
\begin{bmatrix}
\phi\sb 1(x,\omega)\\\phi\sb 2(x,\omega)
\end{bmatrix}\in\C^4
$,
where for $j=1,2$ the $\phi_j\in\C^2$
are the components of $\phi$
in the range of the projection operators
$\varPi\sb{1}=\frac 1 2(1+\beta)$,
and $\varPi\sb{2}=\frac 1 2(1-\beta)$ (under obvious
isomorphisms of these subspaces with $\C^2$).
The components \(\phi\sb 1\) (resp. \(\phi\sb 2\)) are sometimes referred to
as the electronic (resp. positronic) components, although
strictly speaking this terminology should only be used
after second quantization. 
Applying $\varPi\sb{1}$ and $\varPi\sb{2}$
to \eqref{omega-phi-is},
we have:
\begin{equation}\label{mds1}
\omega{\phi\sb{1}}
={\bm\sigma\cdot}(-i\bm\nabla-{\ech}\mathbf{A}){\phi\sb{2}}+m{\phi\sb{1}}+{\ech}A\sp 0{\phi\sb{1}},
\end{equation}
\begin{equation}\label{mds2}
\omega\phi\sb 2
={\bm\sigma\cdot}(-i\bm\nabla-{\ech}\mathbf{A}){\phi\sb{1}}-m{\phi\sb{2}}+{\ech}A\sp 0{\phi\sb{2}},
\end{equation}
\begin{equation}\label{mds3}
-\Delta A\sp 0=\ech(\phi_1\sp\ast\phi_1+\phi_2\sp\ast\phi_2),
\qquad
-\Delta\mathbf{A}
=\ech\phi\sp\ast\bm\alpha\phi
=\ech\big(
\phi\sb{1}\sp\ast\bm\sigma{\phi\sb{2}}
+\phi\sb{2}\sp\ast\bm\sigma{\phi\sb{1}}\big).
\end{equation}
We write \eqref{mds3} as
\begin{equation}
A\sp 0=\ech\bfN*\,(\phi_1\sp\ast\phi_1+\phi_2\sp\ast\phi_2)\,,
\qquad
\mathbf{A}
=\ech\bfN*\,(\phi\sb{1}\sp\ast\bm\sigma{\phi\sb{2}}
+\phi\sb{2}\sp\ast\bm\sigma{\phi\sb{1}})\,,
\label{nl2}
\end{equation}
and regard the potentials $A\sp 0$ and $\mathbf{A}=\{A\sp j\}_{j=1}^{3}$
as non-local functionals of $\phi=\begin{bmatrix}\phi_1 \\\phi_2\end{bmatrix}$.
Above,
\begin{eqnarray}\label{def-Newton}
\bfN(x)=\frac{1}{4\pi|x|},
\qquad
x\in\R^3\setminus\{0\},
\end{eqnarray}
is the Newtonian potential.
In abstract terms, the equations are of the form
$\omega{{Q}}'=\mathcal{E}'$ where
the charge functional is
\begin{eqnarray}\label{def-q}
{{Q}}(\phi)=\int\sb{\R^3}\phi\sp\ast(x)\phi(x)\,dx
\end{eqnarray}
(cf. \eqref{def-q0}),
and, regarding $A\sp 0,\,\mathbf{A}$
as non-local functionals \eqref{nl2}
of $\phi$,
the Hamiltonian
$\mathcal{E}(\phi)$ is given by
\begin{equation}\label{def-e}
\mathcal{E}(\phi)=\int\,
\Bigl(
-i\phi\sp\ast{\bm\alpha\cdot}\bm\nabla\phi
+m\phi\sp\ast\beta\phi
+\frac{{\ech}}{2}
\big(
  A\sp 0\phi\sp\ast\phi-\mathbf{A}\cdot(\phi\sp\ast\bm\alpha\phi)
\big)
\,\Bigr)\,dx.
\end{equation}
We record the following formulae for the functional derivatives:
\[
\frac{\delta{{Q}}}{\delta\phi(x)}\,=\,\phi\sp\ast(x)\,,\qquad
\frac{\delta\mathcal{E}}{\delta\phi(x)}\,=\,\left(
\bm\alpha\cdot(-i\bm\nabla-\ech \mathbf{A})\phi
+m\beta\phi
+\ech A\sp 0\phi
\right)\sp\ast(x)\,;
\]
\[
\frac{\delta{{Q}}}{\delta\phi\sp\ast(x)}\,=\,\phi(x)\,,\qquad
\frac{\delta\mathcal{E}}{\delta\phi\sp\ast(x)}\,=\,\left(
\bm\alpha\cdot(-i\bm\nabla-\ech \mathbf{A})\phi
+m\beta\phi
+\ech A\sp 0\phi
\right)(x)\,.
\]
If, say, \(\mathcal{E} \) has a directional derivative at \(\phi\in H^1(\R^3;\C^4) \)
along the direction \(f\in \mathscr{S}(\R^3;\C^4) \),
then\footnote{Recall that \(\ast \) is Hermitian conjugate, so for example \(f\sp\ast \)
  and \(\frac{\delta\mathcal{E}}{\delta\phi\sp\ast(x)}\) are, respectively, row and column vectors
  pointwise, so that the integrand is a scalar.}
\begin{equation}
\frac{d}{ds}\Biggr|_{s=0}\mathcal{E}(\phi+sf)\,=\,\langle\mathcal{E}'(\phi),f\rangle\,=\,
\int\,\Bigl(\frac{\delta\mathcal{E}}{\delta\phi(x)}\,f(x)\,
+\,f\sp\ast(x)\frac{\delta\mathcal{E}}{\delta\phi\sp\ast(x)}
\Bigr)\,dx\,.
\label{last}
\end{equation}
This integral extends to define a bounded linear map on  \(L^2(\R^3;\C^4) \) which
we continue to write as \(f\mapsto \langle\mathcal{E}'(\phi),f\rangle\),
and refer to as a directional derivative.

In accordance with the heuristics in 
\S\ref{sect-heur} we introduce functions
${\varPhi\sb{1}}(y,\epsilon),\ {\varPhi\sb{2}}(y,\epsilon)\in\C^2$
and
$\eurA\sp\mu(y,\epsilon)$
by the following scaling relations:
\begin{eqnarray}\label{ansatz-2}
\begin{array}{l}
{\phi\sb{1}}(x,\omega)=\epsilon^3{\varPhi\sb{1}}(\epsilon x,\epsilon),
\qquad
{\phi\sb{2}}(x,\omega)=\epsilon^2{\varPhi\sb{2}}(\epsilon x,\epsilon),
\\[1.5ex]
A\sp 0(x,\omega)=\epsilon^2\eurA\sp 0(\epsilon x,\epsilon),
\qquad
A\sp j(x,\omega)=\epsilon^3\eurA\sp j(\epsilon x,\epsilon),
\end{array}
\end{eqnarray}
where
$\epsilon\in(0,m)$ and $\omega\in(-m,0)$ are related by
$\omega=-\sqrt{m^2-\epsilon^2}$.
Then, writing $\bm\nabla\sb{y}$ for the
gradient with respect to $y^j=\epsilon x^j$,
$1\le j\le 3$,
the system \eqref{mds1}--\eqref{mds3}
can be written as follows:
\begin{eqnarray}
\label{sys-phi1}
&&
-2m{\varPhi\sb{1}}
+i\bm\sigma{\cdot\bm\nabla}\sb{y}{\varPhi\sb{2}}
-\epsilon^2{\ech}\eurA\sp 0{\varPhi\sb{1}}
=
-(m+\omega){\varPhi\sb{1}}
-\epsilon^2{\ech}{\eubA\cdot}\bm\sigma{\varPhi\sb{2}},
\\
\label{sys-phi2}
&&
\frac{1}{2m}{\varPhi\sb{2}}
+i\bm\sigma{\cdot\bm\nabla}\sb{y}{\varPhi\sb{1}}-{\ech}\eurA\sp 0{\varPhi\sb{2}}
=
\Big(\frac{1}{2m}-\frac{1}{m-\omega}\Big){\varPhi\sb{2}}
-\epsilon^2{\ech}{\eubA\cdot}\bm\sigma{\varPhi\sb{1}},
\\
\label{sys-a}
&&
\eurA\sp 0
={\ech}\bfN\ast\bigl(\varPhi\sb{2}\sp\ast{\varPhi\sb{2}}
+\epsilon^2\varPhi\sb{1}\sp\ast{\varPhi\sb{1}}\bigr)\,,
\qquad
\eubA
={\ech}\bfN\ast\bigl(\varPhi\sb{1}\sp\ast\bm\sigma{\varPhi\sb{2}}
+\varPhi\sb{2}\sp\ast\bm\sigma{\varPhi\sb{1}}\bigr)\,.
\qquad
\quad
\end{eqnarray}
Recall that $\varphi_0\in\mathscr{S}(\R^3)$
is the ground state solution to
the stationary Choquard equation
\eqref{lambda-phi}
with $\omega\sb 0=-\frac{1}{2m}$:
\begin{equation}\label{NR1}
\frac{1}{2m}\varphi_0
-\frac{1}{2m}\Delta \varphi_0
-
{\ech}^2\big(\bfN\ast\varphi_0^2\big)\varphi_0=0.
\end{equation}
That is,
$\varphi_0(y)$ is a strictly positive, spherically symmetric,
smooth,
strictly monotonically decaying (as a function of $\abs{y}$)
function of Schwartz class.
As discussed in the previous section,
such a solution exists by \cite{MR0471785}.
Using $\varphi_0$, we can
produce a solution
to \eqref{sys-phi1}--\eqref{sys-a}
in the nonrelativistic limit $\epsilon=0$:
\begin{eqnarray}\label{NR2}
\hat{\varPhi}(y)=\begin{bmatrix}\hat\varPhi\sb 1(y) \\
\hat\varPhi\sb 2(y)\end{bmatrix}\in\C^4,
\end{eqnarray}
with
$\ \hat\varPhi\sb 2(y)
=\varphi_0(y)\begin{bmatrix}1\\0\end{bmatrix}\ $
and
$\ \hat\varPhi\sb 1(y)
=\frac{i}{2m}\bm\sigma{\cdot\bm\nabla}\sb{y}\hat\varPhi\sb 2(y)$;
\begin{eqnarray}\label{NR3}
&&
\hat\eurA\sp 0(y)=\ech\bfN*\varphi_0^2\,,\qquad
\hat\eurA\sp 1(y)=-\frac{\ech}{m}\bfN*\varphi_0\partial_2\varphi_0\,,\qquad
\nonumber
\\[1ex]
&&
\hat\eurA\sp 2(y)=+\frac{\ech}{m}\bfN*\varphi_0\partial_1\varphi_0\,,\qquad
\hat\eurA\sp 3(y)=0.
\end{eqnarray}
The symmetry of this configuration
is axial, with the magnetic field along the $z$ axis of symmetry.

In order to describe the maps \(\varPhi\mapsto\eurA\sp\mu\) precisely, we
recall
(see e.g. \cite{MR894477})
that a homogeneous polynomial of degree \(n\)
which maps \(\varPhi\in E\) to  \(\P(\varPhi)\in F\),
from a Banach space \(E\) to a Banach space \(F\), is a mapping
of the form \(\P(\varPhi)=\A(\varPhi,\dots,\varPhi)\) where
\(\A\) is a bounded \(n\)-linear symmetric map \(E\times\dots\times E\to F\).
A polynomial is a finite sum of such homogeneous polynomials,
and an analytic mapping
\(E\to F\) is one given locally as an absolutely and uniformly
convergent power series of
polynomials. Such mappings are automatically smooth.

\begin{lemma}\label{lemma-lum}
\begin{enumerate}
\item
\label{lemma-lum-1}
Let $\varPhi=\begin{bmatrix}\varPhi_1 \\\varPhi_2\end{bmatrix}
\in H\sp 1(\R^3,\C^4)$.
Then $\eurA\sp\mu$ defined by \eqref{sys-a}
satisfy
\(
\eurA\sp\mu\in L\sp\infty(\R^3)\),
\(0\le \mu\le 3\). Furthermore the mappings
\(\varPhi
\mapsto \eubA\)
are degree 2 polynomial
mappings
\(H^1(\R^3,\C^4)\to L^\infty(\R^3)\), and
similarly
\((\varPhi,\epsilon)
\mapsto \eurA\sp 0\) is a polynomial mapping
\(H^1(\R^3,\C^4)\times \R\to L^\infty(\R^3)\).
\item
\label{lemma-lum-2}
The formulae \eqref{sys-a} also define mappings
\[
H^1(\R^3,\C^4)\to \dot H^1(\R^3),
\qquad\qquad
\varPhi
\mapsto \eubA
\]
and
\[
\quad
H^1(\R^3,\C^4)\times \R\to \dot H^1(\R^3),
\qquad
(\varPhi,\epsilon)
\mapsto \eurA\sp 0
\]
which are polynomial mappings into the homogeneous Dirichlet space
\(\dot H^1(\R^3)\).
\item
\label{lemma-lum-3}
Let $\varPhi=\begin{bmatrix}\varPhi_1 \\\varPhi_2\end{bmatrix}
\in H\sp 2(\R^3,\C^4)$.
Differentiation of \eqref{sys-a} gives mappings
\(\varPhi
\mapsto \bm\nabla\eubA\) and
\((\varPhi,\epsilon)
\mapsto \bm\nabla\eurA\sp 0\)
which are polynomial mappings
\[
H^2(\R^3,\C^4)\to L^{\infty}(\R^3)
\qquad
\mbox{and}
\qquad
H^2(\R^3,\C^4)\times \R\to L^{\infty}(\R^3),
\]
respectively.
\end{enumerate}
\end{lemma}

\begin{proof}
{\it(\ref{lemma-lum-1})}
The functions $\eurA\sp\mu$
defined by \eqref{sys-a}
are of the form $\bfN* h$ with 
$h:=f g$,
where $f,\,g\in H^1(\R^3)$.
Due to the Sobolev embedding
$H\sp 1(\R^3)\subset L\sp 6(\R^3)$,
the mapping $(f,\,g)\mapsto h=f g$ is a continuous
bilinear map $H^1(\R^3)\times H^1(\R^3)\to L\sp 1(\R^3)\cap L\sp 3(\R^3)$.
Also \(\bfN*h=(\chi\sb{\mathbb{B}^3_1}\bfN)*h+((1-\chi\sb{\mathbb{B}^3_1})\bfN)*h\)
where
$\mathbb{B}^3_1$
is the unit ball in $\R^3$
and $\chi\sb{\mathbb{B}^3_1}$
is its characteristic function.
It follows from the H\"older inequality that
\[
\|\bfN\ast h\|_{L\sp\infty}\,\leq\,
\norm{(\chi\sb{\mathbb{B}^3_1}\bfN)}_{L^{\frac{3}{2}}}\norm{h}\sb{L^3}+
\norm{((1-\chi\sb{\mathbb{B}^3_1})\bfN)}_{L^\infty}\norm{h}\sb{L^1}\,,
\]
so that the mapping \(L^1\cap L^3\ni h\mapsto \bfN\ast h\in L^\infty\)
is a continuous linear map. It follows that the composition
\((f,g)\mapsto \bfN\ast (f g)\) is a polynomial mapping
\(H^1\times H^1\to L^\infty\).

To prove {\it(\ref{lemma-lum-2})},
we recall that by the Riesz representation theorem
the linear operator \((-\Delta)^{-1}=\bfN*\) is bounded 
\(L^{6/5}(\R^3)\to\dot H^1(\R^3)\) since \(L^{6/5}=(L^6)'\subset (\dot H^1)'\).
The result therefore follows from the fact that (continuing with the same 
notation) the mapping $(f,\,g)\mapsto h=f g$ is a continuous
bilinear map
\[
H^1(\R^3)\times H^1(\R^3)\to L\sp 1(\R^3)\cap L\sp 3(\R^3)
\subset L^{6/5}(\R^3).
\]

The statement
{\it(\ref{lemma-lum-3})}
is proved by noting that a similar structure holds for the
differentiated versions of formulae \eqref{sys-a} by the Leibniz rule, 
and so the same proof works. \end{proof}




Let
\begin{eqnarray}\label{def-xy}
X=H^2(\R^3;\C^2)\oplus H^2(\R^3;\C^2)\,,
\qquad
Y=H^1(\R^3;\C^2)\oplus H^1(\R^3;\C^2)\,,
\end{eqnarray}
and define the corresponding exponentially weighted spaces,
using the norms
introduced in Corollary \ref{bounds}:
\begin{eqnarray}
\begin{cases}
X^\theta=H^{2,\theta}(\R^3;\C^2)\oplus H^{2,\theta}(\R^3;\C^2)
\,,
\\
Y^\theta=H^{1,\theta}(\R^3;\C^2)\oplus H^{1,\theta}(\R^3;\C^2)\,,
\end{cases}
\qquad
\theta\ge 0.
\label{def-xytheta}
\end{eqnarray}
The case \(\theta=0 \) reduces to the standard Sobolev norms.

Introducing the notation
\begin{equation}\label{def-not-p}
{\mom}=-i\bm\sigma{\cdot\bm\nabla}\sb{y},
\end{equation}
we rewrite
\eqref{sys-phi1}, \eqref{sys-phi2}
as the equation $\mathcal{F}=0$,
where (for small nonnegative \(\theta \))
\begin{eqnarray}\label{df}
&&
\mathcal{F}:\;
X^\theta\times(-m,+m)\longrightarrow Y^\theta,
\\[2ex]
\nonumber
&&
\mathcal{F}:\;
(\varPhi,\,\epsilon)
\mapsto
\begin{bmatrix}
\displaystyle
2m{\varPhi\sb{1}}
+{\mom}
{\varPhi\sb{2}}+\epsilon^2{\ech}\eurA\sp 0{\varPhi\sb{1}}
-(m+\omega){\varPhi\sb{1}}
-\epsilon^2{\ech}{\eubA\cdot}\bm\sigma{\varPhi\sb{2}}
\\[1ex]
\displaystyle
-\frac{1}{2m}{\varPhi\sb{2}}
+{\mom}{\varPhi\sb{1}}+{\ech}\eurA\sp 0{\varPhi\sb{2}}
+
\Big(
\frac{1}{2m}-\frac{1}{m-\omega}
\Big)
{\varPhi\sb{2}}
-\epsilon^2{\ech}{\eubA\cdot}\bm\sigma{\varPhi\sb{1}}\,
\end{bmatrix}.
\end{eqnarray}
Above, $\omega=-\sqrt{m^2-\epsilon^2}$.
As before, we regard the 
$\eurA\sp \mu=(\eurA\sp 0,\eubA)$,
$\ \eubA=\{\eurA\sp j\}_{j=1}^3$,
as non-local functionals
$\eurA\sp \mu=\eurA\sp \mu(\varPhi,\epsilon)$
determined by \eqref{sys-a}.
With this understood,
the entire system \eqref{mds1}--\eqref{mds3} is encapsulated
in the equation $\mathcal{F}(\varPhi,\epsilon)=0$ for
$\varPhi=\begin{bmatrix}\varPhi_1 \\\varPhi_2\end{bmatrix}$ only.
We note that in terms of the functionals
${{Q}}$ and $\mathcal{E}$
defined by \eqref{def-q}, \eqref{def-e},
one has
\[
\mathcal{F}(\varPhi,\epsilon)\,=\,
\begin{bmatrix}
\epsilon^{-3} & 0                 \\
0& \epsilon^{-4}
\end{bmatrix}
(\mathcal{E}'-\omega{{Q}}')
\left(
\begin{bmatrix}
\epsilon^3{\varPhi\sb{1}}
\\
\epsilon^2{\varPhi\sb{2}}
\end{bmatrix}
\right).
\]
The nonrelativistic limit \(\hat\varPhi\) satisfies 
$\mathcal{F}(\hat\varPhi,0)=0$
(cf.~\eqref{NR2}, \eqref{NR3}),
so that
to obtain solutions for small $\epsilon$ by the implicit function theorem
it is necessary to
compute the derivative of $\mathcal{F}(\varPhi,\epsilon)$ at the point
$(\hat\varPhi,0)$. This is 
determined by the set of directional derivatives.
Define
$\bme_1=\begin{bmatrix}1            \\0\end{bmatrix}$ and
$\bme_2=\begin{bmatrix}0            \\1\end{bmatrix}$, and
let
$g\in H^1(\R^3,\C^2)$.
To compute the directional derivatives, first note that
$\eurA\sp j$ drops out on putting $\epsilon=0$, and then
note further that by \eqref{sys-a} only the derivative 
of $\eurA\sp 0$  at 
$(\Phi,\epsilon)=(\hat\varPhi,0)$ with respect to
$\varPhi\sb 2$ is nonzero, with derivative given by
$$
\frac{d}{dt}\eurA\sp 0
\Big(\begin{bmatrix}\hat\varPhi_1 \\\hat\varPhi_2+t g\end{bmatrix},\epsilon
\Big)
\bigg|_{t=0,\epsilon=0}
=2{\ech}\bfN\ast\bigl(\varphi_0\Re\langle \bme_1,g\rangle\sb{\C^2}\bigr),
$$
with the Newtonian potential from \eqref{def-Newton},
where
$
\langle v,w\rangle\sb{\C^2}=\bar v_1 w_1+\bar v_2 w_2
$
is the complex sesquilinear inner product of $v,\,w\in\C^2$.
We deduce that for $\C^2$-valued Schwartz functions $U$ and $V$,
\[
\frac{d}{dt}\mathcal{F}
\Big(
\begin{bmatrix}
\hat\varPhi_1+t U                 \\
\hat\varPhi_2+t V\end{bmatrix}
,\epsilon
\Big)
\big|_{t=0,\epsilon=0}
=\;\eurM\begin{bmatrix}U          \\V\end{bmatrix}
\,,
\]
where
\begin{equation}\label{def-m}
\eurM=
\begin{bmatrix}
2m&{\mom}
\\{\mom}&-\frac{1}{2m}+{\ech}\hat\eurA\sp 0+2
{\ech}^2
\varphi_0 \bme_1
\bfN\ast(\varphi_0\Re\langle \bme_1,\,\cdot\,\rangle\sb{\C^2})
\end{bmatrix}
\end{equation}
and ${\mom}=-i\bm\sigma{\cdot\bm\nabla}\sb{y}$
was introduced in \eqref{def-not-p}.
Thus the derivative of $\mathcal{F}$ at
the nonrelativistic limit point $(\hat\varPhi,0)$
is the linear map
$D\mathcal{F}(\hat\varPhi,0)$
given by the matrix
$
\eurM
$. This is a differential operator, which we consider as an
unbounded operator on $L^2(\R^3;\C^2)
\oplus L^2(\R^3;\C^2)$. 
\begin{lemma}
\label{lemma-fpc}
\begin{enumerate}
\item
\label{lemma-fpc-1}
The map
$\eurM:\,\begin{bmatrix}U         \\V\end{bmatrix}
\mapsto\begin{bmatrix}F           \\G\end{bmatrix}$
is a Hermitian operator
with domain
$X$ (cf. \eqref{def-xy}).
\item
\label{lemma-fpc-2}
For small nonnegative $\theta$,
the mapping $\eurM$
is continuous from
$X^\theta$ into $Y^\theta$
(cf.~\eqref{def-xytheta}).
\item
\label{lemma-fpc-3}
The kernel
of $\eurM$
is given by
\begin{eqnarray}
&&
\!\!\!\!\!\!
\ker\eurM
\nonumber
\\
&&
\!\!\!\!\!\!
=
\left\{
\Big(-\frac{{\mom\,} V}{2m},\,V\Big): V
=
(\mathbf{a}{\cdot\bm\nabla}\sb{y}
\varphi_0+i b\varphi_0)\bme_1+c\varphi_0\,\bme_2,
\ (\mathbf{a},\,b,\,c)\in\R^3\times\R\times\C
\right\}.
\nonumber
\end{eqnarray}
\item
\label{lemma-fpc-4}
The range of $\eurM:\,
\begin{bmatrix}U                  \\V\end{bmatrix}
\mapsto
\begin{bmatrix}F                  \\G\end{bmatrix}$
is closed in the topology of $Y$
and is given  by
\begin{eqnarray}
             &   & 
\range \eurM
=(\ker\eurM)\sp\perp
               =  
\left\{\begin{bmatrix}F           \\G\end{bmatrix}\in Y:
\,\Re \Big(\frac{{\mom\,} F}{2m} - G\Big)_1\in(\ker L_1)^\perp,
\right.
\nonumber
\\
&& 
\hspace{2.3cm}
\left.
\quad
\,\Im\Big(\frac{{\mom\,} F}{2m} - G\Big)_1
\in(\ker L_0)^\perp,
\,\Big(\frac{{\mom\,} F}{2m} - G\Big)_2\in(\ker L_0)^\perp
\right\},
\nonumber
\end{eqnarray}
where $^\perp$ is the orthogonal complement with 
respect to the inner product in $L^2\oplus L^2$.
\item
\label{lemma-fpc-5}
The inverse of
$\eurM:\,\begin{bmatrix}U         \\V\end{bmatrix}
\mapsto \begin{bmatrix}F          \\G\end{bmatrix}$
is given by
\begin{eqnarray}
\nonumber
&&
\!\!\!\!\!\!\!
U=\frac{1}{2m}\bigl({F}-{{\mom}\, V}\bigr)\,,
\\[1ex]
&&
\nonumber
\!\!\!\!\!\!\!
V=\bme_1 V_1+\bme_2 V_2
\\
&&
\nonumber
=
\left(
L_1^{-1}\Re\Big(\frac{{\mom\,} F}{2m} - G\Big)_1
+i L_0^{-1}\Im\Big(\frac{{\mom\,} F}{2m} - G\Big)_1
\right)
\bme_1
+L_0^{-1}\Big(\frac{{\mom\,} F}{2m} - G\Big)_2\bme_2\,,
\end{eqnarray}
\end{enumerate}
where
the definitions
and properties of the operators $L_0,\,L_1$ are given in \S\ref{sect-nr}
(cf. \eqref{def-l0-l1}).
\end{lemma}

\begin{proof}
The proof depends on some properties of the linearized Choquard equation
from \cite{MR2561169} which are stated 
in \S\ref{sect-nr}.
The fact in {\it(\ref{lemma-fpc-1})}
that $\eurM$ is Hermitian follows from the fact that $\mom$
is Hermitian. From Lemma~\ref{lemma-lum} the assertion
{\it(\ref{lemma-fpc-2})} is immediate from 
the properties of
$\bfN$ and the fact that $\varphi_0$ and its
partial derivatives are smooth and exponentially decreasing.
To prove
{\it(\ref{lemma-fpc-3})}, {\it(\ref{lemma-fpc-4})}, and {\it(\ref{lemma-fpc-5})},
we consider how
to solve 
$\eurM\begin{bmatrix}U \\V\end{bmatrix}=
\begin{bmatrix}F\\G\end{bmatrix}$, i.e. the system
\[
\eurM
\begin{bmatrix}
U\\V
\end{bmatrix}
=
\begin{bmatrix}
2m U+{\mom\,} V
\\[0.5ex]
{\mom\,} U
-\frac{V}{2m}
+
{\ech}\hat\eurA\sp 0 V
+2{\ech}^2\varphi_0\bme_1\bfN\ast(\varphi_0\Re V_1)
\end{bmatrix}
=\begin{bmatrix}F\\G\end{bmatrix}.
\]
We first express $U$ in terms of $V$
by
$
U=\frac{1}{2m}(F-{{\mom\,} V})\,,
$
and, writing $V=V_1\bme_1+V_2\bme_2$,
\[
\frac{{\mom\,} F}{2m}
+
\frac{\Delta V}{2m}-\frac{V}{2m}
+{\ech}\hat\eurA\sp 0 V
+2{\ech}^2\varphi_0\bme_1\bfN\ast(\varphi_0\Re V_1)
=G.
\]
Referring to the definitions of $L_0$ and $L_1$
in \S\ref{sect-nr} (cf. \eqref{def-l0-l1}), with
${\omega\sb 0}$ set equal to $-1/(2m)$,
we arrive at the following equations:
\begin{equation}
L_1 V_1=\Big(\frac{{\mom\,} F}{2m} - G\Big)_1\,,
\qquad
L_0 V_2=\Big(\frac{{\mom\,} F}{2m} - G\Big)_2\,.
\end{equation}
It is useful here that the components
with respect to $\bme_1$ and $\bme_2$  are decoupled.
The identification of the kernel in {\it(\ref{lemma-fpc-3})}
is then a specialization of this, given the information on 
$\ker L\sb 0$ and $\ker L\sb 1$ in \S\ref{sect-nr}, and also
{\it(\ref{lemma-fpc-4})}
is a consequence of the identification of the ranges of $L\sb 0$ and 
$L\sb 1$ given in \S\ref{sect-nr}
(cf. Lemma~\ref{lemma-l0-l1}).
\end{proof}

The existence statement in Theorem~\ref{theorem-sw-dm}
now almost follows from using the implicit function theorem
to solve $\mathcal{F}=0$. In order to handle the degeneracies
arising from symmetries we use the following trick from \cite{MR1708440}, which
we state as a lemma applying to functionals \(\mathcal{E} \) and \({{Q}} \)
defined on a general real Hilbert space \(H\). In the present paper the relevant
choice is \(H=L^2(\R^3;\C^4)\), with the real \(L^2\) inner product
\begin{equation}\label{rip}
  \langle\phi,\psi\rangle_{L^2}=\Re\int_{\R^3}\phi\sp\ast(x)\psi(x)\,dx\, .
  \end{equation}

\begin{lemma}\label{lemma-trick}
Let $\{{\xi}^\alpha\}_{\alpha\in I}$ be a finite 
collection of elements of a real Hilbert space \(H\),
indexed by \(I \), all lying in some subspace
\(F\subset H\) with the property that
\(\mathcal{E} \) and \({{Q}} \) are differentiable
along each direction \(f\in F \) with directional derivatives
\(\langle{{Q}}'\,,\,f\rangle \) and \(\langle\mathcal{E}'\,,\,f\rangle \)
for \(f\in F\,. \) Assume further that the $\{{\xi}^\alpha\}$
correspond to infinitesimal symmetries, in the sense that
$\langle{{Q}}'\,,\,{\xi}^\alpha\rangle=0
=\langle\mathcal{E}'\,,\,{\xi}^\alpha\rangle\,,$ for all \(\alpha\in I\,. \)
Let \(\phi \) satisfy
\begin{eqnarray}\label{omega-q-e}
\omega{{Q}}'-\mathcal{E}'-\sum_{\alpha\in I} a_\alpha {\xi}^\alpha\,\,=\,0\,,
\end{eqnarray}
for some set of numbers $a_\alpha\in\R$. Then \(a_\alpha=0 \,\forall\alpha\in I\)
as long as 
the matrix $\langle {\xi}^\alpha,{\xi}^\beta\rangle$ is nondegenerate.
\end{lemma}

\begin{proof}
Put \(f=\xi^\beta \) and make use of the assumptions, then
\(\sum_{\alpha\in I}\, a_\alpha\langle {\xi}^\alpha,{\xi}^\beta\rangle=0\),
which implies \(a_\alpha=0 \,\forall\alpha\in I\) by the nondegeneracy
of the matrix $\langle {\xi}^\alpha,{\xi}^\beta\rangle$.
\end{proof}

\begin{remark}
  \label{rrr}
It follows from the proof that
instead of \eqref{omega-q-e}
it is sufficient to assume that
\[
\Big\langle\,\omega{{Q}}'(\phi)-\mathcal{E}'(\phi)-\sum a_\alpha {\xi}^\alpha\,,\,
f\,\Big\rangle\,=\,0\,,
\qquad
\forall f\in F.
\]
 \end{remark}

\begin{example}
For a simple example consider $\psi:\R\to\C$
and ${Q}=\frac{1}{2}\int\,|\psi|^2\,dx$ and
${E}=\int\,(\frac{1}{2}|\nabla\psi|^2-\frac{1}{p+1}|\psi|^{p+1})\,dx$
the symmetry of phase rotation corresponds to the infinitesimal
symmetry ${\xi}(\psi)=i\psi$, and it is easy to check that given
an $H^1$ distributional
solution of $\omega {Q}'-{E}'-a{\xi}=0$, i.e. a
weak solution of $-\Delta\psi-|\psi|^p\psi=\omega\psi-i a\psi$,
for any $a\in\R$,
one necessarily has $a=0$. The same holds in higher dimensions
as long as $p$ is such that the equation holds as an equality in $H^{-1}$.
\end{example}

\begin{remark}\label{remark-trick}
\label{is}
The advantage of solving a more general
equation with the unknown ``multipliers'' $a_\alpha$ 
is that in an implicit function theorem setting,
the multipliers can be varied to fill out the part of
the cokernel corresponding to the symmetries. It is then shown after the
fact that the multipliers are equal to zero.
The choice of ${\xi}^\alpha$
is determined by the symmetry group;
in the case of Dirac--Maxwell the relevant group is the seven-dimensional
group generated by translations, rotations and phase rotation. Thus
the index set is \(\alpha\in\{1,\dots 7\} \) with the corresponding
multipliers \(a_\alpha\) written in order as
\( (\mathbf{a},\mathbf{b},a_0)\in\R^3\times\R^3\times\R\,. \)
The infinitesimal versions of these actions give the following vector fields
on the phase space \(H^1(\R^3;\C^4) \)
(\cite{MR0187641} or \cite[\S 3.4]{sakurai1967advanced}):
\begin{eqnarray}\label{def-xi-eta-zeta}
\bm\xi=\bm\nabla\phi\,,
\qquad
\bm\eta
=\bm{l}\phi+\frac{i}{2}
\begin{bmatrix} \bm\sigma  & 0 \\ 0 & \bm\sigma\end{bmatrix}
\phi\,,
\qquad
\zeta
=i\phi\,,
\label{angen}
\end{eqnarray}
where \(\bm{l}=\{\epsilon_{i j k} x_j\partial_k\}_{i=1}^{3}\) is the
standard angular momentum generator. The Lorentz invariance of the
Dirac construction ensures that
\begin{equation}\label{ang}
  \Biggl
      [\bm{l}\,+\,\frac{i}{2}
\begin{bmatrix} \bm\sigma  & 0 \\ 0 & \bm\sigma\end{bmatrix}\,,\,
  -i\bm\alpha\cdot\bm\nabla
  \Biggr]\,=\,0\,.
\end{equation}
For example, let \(\phi\in H^{2,\theta}(\R^3;\C^4) \) for some \(\theta>0 \);
then, since the Hamiltonian density, i.e. the integrand in \eqref{def-e}, is a scalar with
respect to Euclidean transformations, we have
\begin{align*}
\int_{\R^3 }\,&
\Bigl(
-i\tilde\phi\sp\ast{\bm\alpha\cdot}\bm\nabla\tilde\phi
+m\tilde\phi\sp\ast\beta\tilde\phi
+\frac{{\ech}}{2}
\big(
  \tilde A\sp 0\tilde\phi\sp\ast\tilde\phi-\mathbf{\tilde A}\cdot(\tilde\phi\sp\ast\bm\alpha\tilde\phi)
\big)
\,\Bigr)\,dx\\\,&=\,
\int_{\R^3 }\,
\Bigl(
-i\phi\sp\ast{\bm\alpha\cdot}\bm\nabla\phi
+m\phi\sp\ast\beta\phi
+\frac{{\ech}}{2}
\big(
  A\sp 0\phi\sp\ast\phi-\mathbf{A}\cdot(\phi\sp\ast\bm\alpha\phi)
\big)
\,\Bigr)\,dx\,,
\end{align*}
where \(\tilde\phi\,,\tilde A\sp\mu \) are obtained by the action of a spatial
rotation on \(\phi\,,A\sp\mu\,. \)
Differentiation of this integral identity with respect to the parameter of rotation
$\bm\eta=\{\eta_j\}_{j=1}^3$
and use of \eqref{ang} leads to
\begin{equation}\label{gen-is}
\langle\mathcal{E}'(\phi),\eta_j\rangle\,=\,\int_{\R^3 }\,\Bigl(\,\eta\sp\ast_j(x)\,\frac{\delta\mathcal{E}}{\delta\phi\sp\ast(x)}\,
+\,\frac{\delta\mathcal{E}}{\delta\phi(x)}\,\eta_j(x)\,\Bigr)
\,dx\,=0\,,
\qquad
1\le j\le 3\,.
  \end{equation}
The same is true for translations and phase rotations (i.e.,
the case of \(\bm\xi\) and \(\zeta \), respectively, in place of
\(\bm\eta \)). We call vector fields on the phase space such as \(\bm\eta\)
generalized infinitesimal symmetries if they are locally square integrable and
satisfy \eqref{gen-is} and \(\langle{{Q}}',\,{\bm\eta}\rangle=0\)
when \(\phi\in H^{1,\theta}(\R^3,\C^4) \) for some nonnegative
\(\theta\,. \) 
\end{remark}
\begin{example}
  As an example of Lemma~\ref{lemma-trick} for the case at hand, with \(\mathcal{E},{{Q}} \)
  as in \eqref{def-q} and \eqref{def-e}, assume that \(\phi\in H^1(\R^3;\C^4) \) is such that
\[
\omega{{Q}}'(\phi)-\mathcal{E}'(\phi)-\mathbf{a}\cdot{\bm\xi}+i a_0\zeta\,\,=\,0
\qquad(\hbox{in $L^2$}),
\]
with
$(a_0,\mathbf{a})\in\R\times\R^3$,
and \(\bm\xi \), \(\zeta \) as in \eqref{def-xi-eta-zeta}. Then in fact
\[
\omega{{Q}}'(\phi)-\mathcal{E}'(\phi)\,=\,0
\qquad(\hbox{in $L^2$})\,.
\]
\end{example}

  In order to treat the rotational symmetry \(\bm\eta \) a technical modification is needed
  on account of the linear growth at infinity of the coefficient in the angular
  momentum vector field \(\bm{l}=\{\epsilon_{i j k}x_j\partial_k\}_{i =1}^{3} \),
  which potentially means that
  \(\bm\eta \) might not be square integrable. The most efficient way to circumvent this
  issue seems to be to work in the exponentially weighted spaces \(H^{s,\theta} \) defined above.
  The following lemma, which is proved in exactly the same way as
  Lemma~\ref{lemma-trick}, gives a slightly more general setting than needed.
\begin{lemma}\label{lemma-trick-gen}
  Let \(\phi\in H^{2,\theta}(\R^3;\C^4) \) for some \(\theta\geq 0 \), and assume
  there is a finite set  $\{{\xi}^\alpha\}_{\alpha\in I}$ of generalized infinitesimal
  symmetries, in the sense of Remark~\ref{remark-trick}, which
  all lie in some subspace \(F\subset L^2_{\mathrm{loc}}(\R^3;\C^4) \).
  Assume that \(\phi\) satisfies
\[
\omega{{Q}}'(\phi)\,-\,\mathcal{E}'(\phi)\,
  -\,\sum_{\alpha\in I} a_\alpha {\xi'}^\alpha(x)\,=0\
\] 
for some set of numbers $a_\alpha\in\R$,
and for some finite set $\{{\xi'}^\alpha\}_{\alpha\in I}$
of elements of \(F' \), the dual space of
\(F \). If the matrix with entries $\langle {\xi'}^\alpha,{\xi}^\beta\rangle_{L^2}$,
computed using the inner product \eqref{rip},
 is nondegenerate
then \(a_\alpha=0 \ \ \forall\alpha\in I\).
\end{lemma}
In the case at hand, under the assumption
\(\phi\in H^{2,\theta}(\R^3;\C^4) \) for some \(\theta>0 \),
all the vector fields in \eqref{def-xi-eta-zeta} are actually square
integrable, but it is nevertheless necessary to introduce a spatial cut-off
into the  definition of the \(\xi'_\alpha \) for which
the nondegeneracy assumption holds,
see below. We are looking for $\varPhi(\epsilon)$ in the form
\begin{equation}\label{varphi-varpsi}
\varPhi(\epsilon)=\hat\varPhi+\varPsi(\epsilon),
\qquad
\varPsi(0)=0.
\end{equation}
We use the same  component notation as above:
$
\hat\varPhi=\begin{bmatrix}\hat\varPhi\sb 1
\\\hat\varPhi\sb 2\end{bmatrix}\in\C^4\,,
$
$
\varPsi=\begin{bmatrix}\varPsi\sb 1
\\\varPsi\sb 2\end{bmatrix}\in\C^4\,.
$
To make use of Lemma~\ref{lemma-trick-gen} we will apply the 
implicit function theorem to the function
\begin{eqnarray}\label{def-g}
\mathcal{G}_R(\varPsi,\mathbf{a},\mathbf{b},\epsilon)
&=&
\mathcal{F}(\hat\varPhi+\varPsi,\epsilon)\,
\,+\,
\chi_R\mathbf{a}{\cdot\bm\nabla}\sb{y}
\begin{bmatrix}\epsilon(\hat\varPhi\sb 1+\varPsi\sb 1) \\
\hat\varPhi\sb 2+\varPsi\sb 2\end{bmatrix}
\\
\nonumber
&&+
\chi_R
\,\mathbf{b}\cdot
\left(
\begin{bmatrix}
\epsilon\bm{l}(\hat\varPhi\sb 1+\varPsi\sb 1)
\\
\bm{l}(\hat\varPhi\sb 2+\varPsi\sb 2)
\end{bmatrix}
+
\frac{i}{2}
\begin{bmatrix}
\epsilon\bm\sigma(\hat\varPhi\sb 1+\varPsi\sb 1)
\\
\bm\sigma(\hat\varPhi\sb 2+\varPsi\sb 2)
\end{bmatrix}
\right)\,.
\end{eqnarray}
Here \(\chi_R(\cdot)=\chi(\cdot/R) \), where
$R\ge 1$ and \(\chi\in C_0^\infty(\R^3) \) is a radially
symmetric function which satisfies
\(\chi(y)=1 \) for \(|y|\leq 1 \) and 
\(\chi(y)=0 \) for \(|y|> 2 \).
\begin{remark}
Referring to Remark~\ref{is}, we have introduced a linear combination
of the six infinitesimal symmetries corresponding to translation and
rotation, but with a spatial cut-off enforced by multiplication
by \(\chi_R \),
replacing \(\bm\xi,\bm\eta \) by
\[
{\bm\xi}_R=\chi_R\bm\nabla\phi\,,\quad
{\bm\eta}_R
=\chi_R\biggl(\bm{l}\,+\frac{i}{2}
\begin{bmatrix} \bm\sigma & 0\\ 0 & \bm\sigma\end{bmatrix}
\biggr)
\phi\,,
\qquad
R\ge 1,
\] 
respectively. (It is not necessary to also introduce a multiplier for
phase rotation due to the presence of infinitesimal rotation around $x_3$-axis).
In terms of the original variables
(cf. \eqref{def-xi-eta-zeta}):
\begin{equation}\label{g-r-a-b}
\mathcal{G}_R(\varPsi,\mathbf{a},\mathbf{b},\epsilon)\,=\,
\begin{bmatrix}
\epsilon^{-3}             & 0                          \\
0                         & \epsilon^{-4}
\end{bmatrix}
\Bigl(\mathcal{E}'-\omega{{Q}}'
+\epsilon \mathbf{a}\cdot\bm\xi_R+\epsilon^2 \mathbf{b}\cdot\bm\eta_R\Bigr),
\end{equation}
evaluated at
$\phi=\begin{bmatrix}\epsilon^3(\hat\varPhi\sb 1+\varPsi\sb 1)
                                                       \\
\epsilon^2 (\hat\varPhi\sb 2+\varPsi\sb 2)
\end{bmatrix}
$
\,.

\end{remark}

The idea is to solve \(\mathcal{G}_R=0 \) for some fixed 
large \(R\gg 1 \) and then to show
that this actually gives solutions to \(\mathcal{F}=0 \) for \(\epsilon \)
sufficiently small. The spatial cut-off ensures that \(\mathcal{G}_R \)
is well-behaved on the Sobolev spaces \(H^{s,\theta}\,. \)

\begin{proposition}\label{prop-sw-dm}
There is ${\epsilon\sb\ast}>0$ such that
for $\epsilon\in(-\epsilon\sb\ast,{\epsilon\sb\ast})$
there is a solution to \eqref{mds1}--\eqref{mds3},
with $\omega=-\sqrt{m^2-\epsilon^2}$,
given by the Ansatz \eqref{ansatz-2} with 
$\varPhi(\epsilon)=\hat{\varPhi}+\varPsi(\epsilon)$
obtained from a $C^\infty$-function
\[
\varPsi\in C^\infty\bigl(
(-{\epsilon\sb\ast},{\epsilon\sb\ast})\,;\,
H^{2,\theta}(\R^3;\C^4)
\cap \ker\eurM^\perp
\bigr)
\]
for small positive \(\theta \), and
satisfying $\varPsi(0)=0$,
and
$\eurA\sp 0\in C^\infty\bigl(
(-{\epsilon\sb\ast},{\epsilon\sb\ast})\,;\, \dot H\sp 1\cap L^\infty\bigr)$,
$\eurA\sp j\in C^\infty\bigl(
(-{\epsilon\sb\ast},{\epsilon\sb\ast})\,;\,
\dot H\sp 1\cap L^\infty\bigr)
$
given by \eqref{sys-a}:
\begin{eqnarray*}
 &  & 
\eurA\sp 0={\ech}\bfN*\,(\varPhi_1\sp\ast\varPhi_1+
\epsilon^2\varPhi_2\sp\ast\varPhi_2)
\in C^\infty\bigl(
(-{\epsilon\sb\ast},{\epsilon\sb\ast})\,;\, \dot H\sp 1\cap L^\infty\bigr)
\,,
 \\
 &  & 
\eubA={\ech}\bfN*\,(\varPhi\sb{1}\sp\ast\bm\sigma{\varPhi\sb{2}}
+\varPhi\sb{2}\sp\ast\bm\sigma{\varPhi\sb{1}})
\in C^\infty\bigl(
(-{\epsilon\sb\ast},{\epsilon\sb\ast})\,;\,
\dot H\sp 1\cap L^\infty\bigr)
\,.
\end{eqnarray*}
Above,
$\dot H^1=\dot H^1(\R^3,\R)$ is the homogeneous Dirichlet
space of $L^6$ functions with
\[
\|f\|_{\dot H^1}^2:=\int\sb{\R^3}|\nabla f|^2\,dx<\infty.
\]
One has
\begin{equation}\label{phi1-phi2-epsilon-2}
\norm{\varPhi(\epsilon)-\hat\varPhi}\sb{H\sp 2}
=O(\epsilon^2)\,,
\qquad
\epsilon\in(-\epsilon\sb\ast,\epsilon\sb\ast)\,.
\end{equation}
The functions
$\varPhi_1(y,\epsilon)$, $\varPhi_2(y,\epsilon)$
are even in $\epsilon$.
\end{proposition}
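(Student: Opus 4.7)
The plan is to apply the implicit function theorem to the map $\mathcal F:X\times(-m,m)\to Y$ of \eqref{df} at the nonrelativistic reference point $(\hat\varPhi,0)$, where $\mathcal F(\hat\varPhi,0)=0$ by \eqref{NR1}--\eqref{NR3}. Its linearization is the self-adjoint Fredholm operator $\eurM$ of Lemma~\ref{lemma-fpc}, which has index zero but a nontrivial six-dimensional kernel generated by the symmetry vector fields \eqref{def-xi-eta-zeta} evaluated at $\hat\varPhi$: three translations supply the directions $\mathbf{a}\cdot\bm\nabla\varphi_0\,\e_1$, two rotations about axes transverse to the spin direction supply the $c\varphi_0\,\e_2$ directions, and the phase rotation supplies $ib\varphi_0\,\e_1$ (the third rotation being linearly dependent on the phase at $\hat\varPhi$ because $\varphi_0$ is radial and $\hat\varPhi_2$ points along $\e_1$). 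Since $\eurM$ is Hermitian, its cokernel has the same six-dimensional description, so a direct IFT is blocked.

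To bypass this degeneracy I would employ the device of Lemma~\ref{lemma-trick}. Choose a basis $\{\xi^\alpha\}_{\alpha=1}^{6}$ of independent symmetry generators whose values at $\hat\varPhi$ span $\ker\eurM$, introduce real multipliers $(a_\alpha)$, restrict $\varPhi-\hat\varPhi$ to a transversal slice in $X$ complementary to $\ker\eurM$, and look for $(\varPhi,a)$ solving
\[
\tilde{\mathcal F}(\varPhi,a,\epsilon):=\mathcal F(\varPhi,\epsilon)-\sum_{\alpha=1}^{6}a_\alpha\,\xi^\alpha(\varPhi)=0.
\]
The derivative of $\tilde{\mathcal F}$ with respect to $(\varPhi,a)$ at $(\hat\varPhi,0,0)$ is $(W,b)\mapsto\eurM W-\sum_\alpha b_\alpha\,\xi^\alpha(\hat\varPhi)$, a bijection onto $Y$: surjective because the $\xi^\alpha(\hat\varPhi)$ exhaust the cokernel of $\eurM$ from Lemma~\ref{lemma-fpc}(4), and injective because the transversality constraint on $W$ removes $\ker\eurM$. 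The standard IFT then produces smooth curves $\epsilon\mapsto(\varPhi(\epsilon),a(\epsilon))$ with $\varPhi(0)=\hat\varPhi$ and $a(0)=0$ on an interval $(-\epsilon_\ast,\epsilon_\ast)$.

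To conclude that $\varPhi(\epsilon)$ genuinely solves $\mathcal F=0$, one must show $a(\epsilon)\equiv 0$. This is precisely the content of Lemma~\ref{lemma-trick}: since each $\xi^\alpha$ is an infinitesimal symmetry of both $\mathcal Q$ and $\mathcal E$, pairing $\tilde{\mathcal F}=0$ with $\xi^\beta(\varPhi(\epsilon))$ yields a homogeneous linear system $\sum_\alpha a_\alpha\,\langle\xi^\alpha(\varPhi),\xi^\beta(\varPhi)\rangle=0$ whose Gram matrix at $\epsilon=0$ is nondegenerate by the independence of the $\xi^\alpha(\hat\varPhi)$, and hence remains so for small $\epsilon$ by continuity.

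The asymptotic estimates \eqref{phi1-phi2-epsilon-2} and the evenness in $\epsilon$ follow from the explicit structure of \eqref{df}: all $\epsilon$-dependence enters through $\epsilon^2$-multiplied terms and through $m\mp\sqrt{m^2-\epsilon^2}=\pm\epsilon^2/(2m)+O(\epsilon^4)$, so $\mathcal F(\hat\varPhi,\epsilon)-\mathcal F(\hat\varPhi,0)=O(\epsilon^2)$ in $Y$ and $\mathcal F(\,\cdot\,,\epsilon)=\mathcal F(\,\cdot\,,-\epsilon)$; the IFT transfers both properties to $\varPhi(\epsilon)$. The regularity assertions $\eurA\sp\mu\in\dot{H}^1\cap L^\infty$ then follow from \eqref{sys-a} via Lemma~\ref{lemma-lum} and Hardy--Littlewood--Sobolev applied to the quadratic source in $\varPhi$. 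I expect the main obstacle to be the precise bookkeeping that identifies $\{\xi^\alpha(\hat\varPhi)\}$ exactly with $\ker\eurM$ (and hence with the cokernel in Lemma~\ref{lemma-fpc}(4)), properly accounting for the axial redundancy between the $z$-rotation and the phase rotation, since this matching is what makes the augmented linearization genuinely bijective and renders the IFT scheme applicable.
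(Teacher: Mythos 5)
Your proposal is correct and follows essentially the same route as the paper: an implicit function theorem for the augmented map $\mathcal{F}$ plus multipliers along the symmetry vector fields (the paper's $\mathcal{G}$ of \eqref{def-g}), using Lemma~\ref{lemma-fpc} for the identification of $\ker\eurM$ with the span of those generators, Lemma~\ref{lemma-trick} together with the Gram-matrix nondegeneracy (Lemma~\ref{lemma-xi-xi}) to kill the multipliers, and the fact that $\mathcal{F}$ depends only on $\epsilon^2$ to get \eqref{phi1-phi2-epsilon-2} and evenness. The only difference is cosmetic: you span $\ker\eurM$ by translations, two rotations and phase rotation (dropping $\eta_3$), whereas the paper keeps all three rotations $\bm\eta$ and drops phase rotation $\zeta$ — the same one-dimensional redundancy at $\hat\varPhi$ resolved by a different choice of basis.
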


\begin{proof}
The proof of existence of solutions to 
\eqref{mds1}--\eqref{mds3}
is by the implicit function theorem and Lemma~\ref{lemma-trick},
perturbing from the nonrelativistic limit point
$\mathcal{F}(\hat\varPhi,0)=0$.
To start, we claim that $\mathcal{F}$, as defined in 
\eqref{df}, is a $C^\infty$-function
\[
\mathcal{F}:\,X^\theta\times (-m,+m)\to Y^\theta, \qquad\theta\ge 0.
\]
To prove this, we notice that the expression for $\mathcal{F}$
is manifestly smooth in $\epsilon$ for $\epsilon^2<m^2,$ and
its dependence on $\varPhi_j$ is built up from compositions of 
certain multilinear maps and linear operators;
the structure of the expressions obtained after
successive differentiation is the same. Referring to the specific formulae, 
the fact that these expressions are all $C^\infty$
is an immediate consequence of  Lemma~\ref{lemma-lum} and the fact that multiplication
gives continuous bilinear (and hence smooth) maps $H^{1,\theta}\times H^{2,\theta}
\to H^{1,\theta}$ and
$H^{2,\theta}\times H^{2,\theta}\to H^{2,\theta}$
(Moser inequalities) for \(\theta\geq 0\,. \) For example, consider the term
\begin{equation}\label{tmap}
\eurA\sp 0{\varPhi\sb{2}}\,=\,
\ech
\bfN\ast\Bigl(\varPhi\sb{2}\sp\ast{\varPhi\sb{2}}
+\epsilon^2\varPhi\sb{1}\sp\ast{\varPhi\sb{1}}\Bigr)\,
{\varPhi\sb{2}}\,,
\end{equation}
for $\varPhi_1,\,\varPhi_2\in H\sp 2(\R^3,\C^2)$.
By Lemma~\ref{lemma-lum}, both \(\eurA\sp 0\) and \(\bm\nabla\eurA\sp 0\)
are bounded in \(L^\infty\), and consequently since \(\varPhi_2\in H^{2,\theta}\),
the product rule implies that \(\eurA\sp 0\varPhi_2\) is bounded
in \(H^{1,\theta}\). On the other hand, the mapping
\eqref{tmap} is cubic and can be expressed in an obvious way
as a composition of the 
embedding
\begin{eqnarray}
\nonumber
H^{2,\theta}\times (-m,+m)
&\to&
H^{2,\theta}\times H^{2,\theta}\times H^{2,\theta}\times(-m,+m)\times (-m,+m),
\\
\nonumber
(\varPhi,\epsilon)
&\mapsto&
(\varPhi,\varPhi,\varPhi,\epsilon,\epsilon)
\end{eqnarray}
with a
mapping into \(H^{1,\theta}\) which is both multilinear and bounded (by identical reasoning
to that in the previous sentence).
The composition is therefore
smooth by the chain rule. Analogous reasoning for the other terms
shows that \(\mathcal{F}\) defines a smooth mapping
$X^\theta\times (-m,+m)\to Y^\theta$ as required.

Computing the derivatives of \eqref{def-g} at $\epsilon=0$, $\varPsi=0$
and using the spherical symmetry of the ground state solution of \eqref{NR1}, 
we see that the functions
$\langle
\{
\p\sb{\mathbf{a}\sb j}{\mathcal{G}_R},
\p\sb{\mathbf{b}\sb j}{\mathcal{G}_R}
\sothat
1\le j\le 3
\}\rangle$
converge strongly in \(L^2(dy) \) as \(R\to+\infty \) to the basis for
$\ker\eurM$ given in Lemma~\ref{lemma-fpc}.
This establishes that if \(R \) is sufficiently large (depending
only on \(\varphi_0 \)), then the derivative of ${\mathcal{G}_R}$ at
$\epsilon=0,\varPsi=0$,
$\mathbf{a}=0$, $\mathbf{b}=0$
with respect to 
$(\Psi,\mathbf{a},\mathbf{b})$ is a linear homeomorphism from 
$\big((\ker\eurM)^\perp\cap X^\theta\big)\times \R^3\times\R^3$ {\em onto} $Y^\theta$
for small positive \(\theta \).

It follows that for such \(R \) there is $\epsilon\sb\ast>0$
such that
there exist $C^\infty$-functions
$\epsilon\mapsto(\varPsi(\epsilon),\mathbf{a}(\epsilon),\mathbf{b}(\epsilon))
\in X\times \R^3\times\R^3$,
defined for $\epsilon\in(-\epsilon\sb\ast,\epsilon\sb\ast)$,
such that
\begin{equation}\label{varphi-perp-ker-m}
{\mathcal{G}_R}
\big(
\varPsi(\epsilon),\mathbf{a}(\epsilon),\mathbf{b}(\epsilon),\epsilon
\big)=0,
\qquad
\varPsi(\epsilon)
\perp
\ker\eurM,
\qquad
\epsilon\in(-\epsilon\sb\ast,\epsilon\sb\ast).
\end{equation}
(This latter condition serves to divide out
by the action of the symmetry group, giving a local slice.)
Referring to Lemma~\ref{lemma-trick-gen}, to deduce that these in
fact generate solutions of $\mathcal{F}=0$
for sufficiently small $\epsilon>0$,
it is sufficient to verify that
$\mathbf{a}(\epsilon)=0$,
$\mathbf{b}(\epsilon)=0$,
which
is in turn a consequence of the nondegeneracy of the appropriate matrix of
inner products, scaled as above. This amounts
to the need to verify nondegeneracy of the $6\times 6$ matrix
\begin{equation}\label{xi-xi}
\begin{bmatrix}
\langle\p\sb{y\sp j}{}\phi,\chi_R\,\p\sb{y\sp k}{}\phi\rangle
\phantom{\int\sb\int}
 & 
\langle\p\sb{y\sp j}{}\phi,\chi_R\,(l_{k'}+\frac i 2\Sigma\sb{k'})\phi\rangle
 \\
\langle(l_{j'}+\frac i 2\Sigma\sb{j'})\phi,
\chi_R\,\p\sb{y\sp k}\phi\rangle
 & 
\langle
(l_{j'}+\frac i 2\Sigma\sb{j'})\phi
,\chi_R(l_{k'}+\frac i 2\Sigma\sb{k'})
\phi\rangle
\end{bmatrix}
\end{equation}
for small $\epsilon$.
(In the matrix \eqref{xi-xi} the indices $j$, $j'$, $k$, $k'$ run 
between $1$ and $3$.)

\begin{lemma}\label{lemma-xi-xi}
For fixed \( R  \) chosen sufficiently large, the matrix
given by \eqref{xi-xi},
evaluated at
$\phi(x,\omega)=\begin{bmatrix}\epsilon^3(\hat\varPhi\sb 1+\varPsi\sb 1)
 \\
\epsilon^2 (\hat\varPhi\sb 2+\varPsi\sb 2)
\end{bmatrix}\biggr |_{y=\epsilon x}
$,
\ $\omega=-\sqrt{m^2-\epsilon^2}$,
is nondegenerate for sufficiently small $\epsilon>0$.
\end{lemma}

\begin{proof}
Clearly
the dominant terms arise from the second (``large'') component
$\epsilon^2 (\hat\varPhi\sb 2+\varPsi\sb 2)$,
giving rise to diagonal matrix elements which, referring to
the block form in \eqref{xi-xi}, are
$O(\epsilon^4)$.
Using $\|\varPsi_j\|_{H^2}=O(\epsilon)$,
we will deduce the result from 
nondegeneracy of the matrix with $\varPsi\sb j$
set equal to zero and \(R=+\infty\,. \)
To start with, using
\[\epsilon^{-2}\phi
=
\begin{bmatrix}-\frac{\epsilon}{2m}{\mom\,}
\hat\varPhi\sb 2
                                          \\
\hat\varPhi\sb 2
\end{bmatrix}
+
\begin{bmatrix}
\epsilon\varPsi\sb 1
                                 \\
\varPsi\sb 2
\end{bmatrix}
\quad
\mbox{and}
\quad
\hat\varPhi\sb 2=\begin{bmatrix}\varphi_0\\0\end{bmatrix},
\]
we calculate the first diagonal term:
\begin{eqnarray}
\nonumber
&&
\epsilon^{-4}\Big\langle\p\sb{y\sp j}{}\phi,\chi_R\p\sb{y\sp k}{}
\phi\Big\rangle_{L^2}
\\
\nonumber
&&
=
\epsilon^2\Big\langle
\p\sb{y\sp j} \varphi_0,
\Big(-\frac{\Delta_y}{4m^2}\Big)\p\sb{y\sp k} \varphi_0
\Big\rangle_{L^2}
+
\Big\langle
\p\sb{y\sp j} \varphi_0,
\p\sb{y\sp k} \varphi_0
\Big\rangle_{L^2}\,+\,O(\epsilon)\,+o(1) 
\\
\nonumber
&&
=
\frac{\delta\sb{j k}}{3}
\Big\langle
\varphi_0,
(-\Delta_y)\varphi_0
\Big\rangle_{L^2}\, +\,O(\epsilon)\,+o(1),
\end{eqnarray}
where we took into account the spherical symmetry
of $\varphi_0$,
which leads to
\[
\langle \p\sb{y\sp 1} \varphi_0,\p\sb{y\sp 1}\varphi_0\rangle_{L^2}
=\frac 1 3\langle \varphi_0,(-\Delta_y)\varphi_0\rangle_{L^2}.
\]
The notation \(o(1) \) indicates the error term which is independent
of \(\epsilon \) and has limit zero as \(R\to+\infty \), and arises
from the limit of convergent integrals such as
\[
\Big\langle
\p\sb{y\sp j} \varphi_0,
\chi_R\p\sb{y\sp k} \varphi_0
\Big\rangle_{L^2}\,=\,
\Big\langle
\p\sb{y\sp j} \varphi_0,
\p\sb{y\sp k} \varphi_0
\Big\rangle_{L^2}\,+o(1)\,.
\]
Next the off-diagonal terms are \(O(\epsilon R)+o(1) \);
indeed, using the same expression for $\epsilon^{-2}\phi$ as above,
we compute:
\begin{eqnarray}
\nonumber
&&
\epsilon^{-4}
\Big\langle\p\sb{y\sp j}{}\phi,\chi_R(l_{k'}\phi+\frac i 2\Sigma\sb{k'}{}\phi)
\Big\rangle_{L^2}
\\
\nonumber
&&
=
-
\frac{\epsilon^2}{4m^2}
\Big\langle
\p\sb{y\sp j}
{\mom\,}
\begin{bmatrix}\varphi_0                  \\0\end{bmatrix},
\frac{i}{2}
\sigma\sb{k'}
{\mom\,}
\begin{bmatrix}\varphi_0                  \\0\end{bmatrix}
\Big\rangle_{L^2}
+
\Big\langle
\p\sb{y\sp j}\begin{bmatrix}\varphi_0     \\0\end{bmatrix},
\frac{i}{2}
\sigma\sb{k'}\begin{bmatrix}\varphi_0     \\0\end{bmatrix}
\Big\rangle_{L^2}
\\
\nonumber
&&
\qquad\quad\,
+\,O(\epsilon R)+o(1)\,.
\end{eqnarray}
The first two terms are actually identically zero 
since $\varphi_0$ is spherically symmetric
(so that by parity considerations 
it is $L^2$-orthogonal to all of its first partial derivatives, which
are in turn orthogonal to all of the second partial derivatives). The
\(O(\epsilon R) \) error term arises from the bound
\(\|\chi_R \bm{l}\varPsi_j\|_{L^2}\leq \const R\|\varPsi_j\|_{H^1} \),
etc.

Finally, for the second diagonal term:
\[\
\epsilon^{-4}
\Big\langle
(l_{j'}+\frac i 2\Sigma\sb{j'})\phi
,\chi_R(l_{k'}+\frac i 2\Sigma\sb{k'})
\phi
\Big\rangle_{L^2}
\,=\,
\frac{\delta\sb{j' k'}}{4}
\langle
\varphi_0,\varphi_0\rangle_{L^2}+O(\epsilon R^2)+o(1).
\]
(Recall that \(\varphi_0\) is radial so that \(l_j\varphi_0=0\) for each
\(j\).)
The nondegeneracy
of the matrix \eqref{xi-xi}  for large fixed \(R \) (again depending
only on \(\varphi_0 \)) and sufficiently 
small $\epsilon$ follows.
\end{proof}

Returning to the proof of Proposition~\ref{prop-sw-dm}, the above implies
that if \(R \) is fixed sufficiently large then there is an interval
\((-\epsilon\sb\ast,\epsilon\sb\ast) \) on which there is a solution
$\varPhi(y,\epsilon)$ of \(\mathcal{F}(\varPhi,\epsilon)=0 \).
Now the implicit function theorem proves that this solution
is $C^\infty$ as a function of $\epsilon\in(-\epsilon\sb\ast,\epsilon\sb\ast)$,
and so
\begin{equation}\label{phi1-phi2-epsilon-1}
\norm{{\varPhi}(\epsilon)-\hat\varPhi}\sb{H\sp 2}
=O(\epsilon)\,.
\end{equation}
To prove a stronger estimate
\eqref{phi1-phi2-epsilon-2},
we take the derivative of
\eqref{df}
with respect to $\epsilon$ at $\epsilon=0$;
this yields
\[
\eurM
\p\sb\epsilon\varPhi\at{\epsilon=0}
=0,
\]
with $\eurM$ given by \eqref{def-m}.
Due to \eqref{varphi-varpsi},
one has
$\p\sb\epsilon\varPhi\at{\epsilon=0}
=\p\sb\epsilon\varPsi\at{\epsilon=0}$;
the requirement
\eqref{varphi-perp-ker-m}
leads to 
$\p\sb\epsilon\varPhi\at{\epsilon=0}
=\p\sb\epsilon\varPsi\at{\epsilon=0}=0$,
and hence
\[
\norm{\varPhi(\epsilon)-\hat\varPhi}\sb{H^2}=O(\epsilon^2)\,.\]
Finally, notice that since the explicit dependence of $\mathcal{F}$
is on $\epsilon^2$, we have
\begin{equation}\label{pm}
\mathcal{F}
\big(\varPhi_1(-\epsilon),\varPhi_2(-\epsilon),-\epsilon\big)=
\mathcal{F}\big(\varPhi_1(-\epsilon),\varPhi_2(-\epsilon),+\epsilon\big)
=0
\end{equation}
and hence
$\varPhi_j(\epsilon)=\varPhi_j(-\epsilon)$, since otherwise it would
be possible to contradict
the local uniqueness part of the conclusion of the implicit function theorem
(applied to ${\mathcal{G}_R}$ with $\mathbf{a}=0$, $\mathbf{b}=0$).
This completes the proof
of Proposition~\ref{prop-sw-dm}
and thus of the existence part of Theorem~\ref{theorem-sw-dm}.
\end{proof}

\begin{remark}
The solutions of $\mathcal{F}(\varPhi,\epsilon)=0$ are obtained for 
both positive and negative epsilon close to zero, but
the $\epsilon$ negative branch apparently gives rise to 
solutions of the Dirac--Maxwell system via \eqref{ansatz-2}
which are related to the positive branch as follows.
By \eqref{ansatz-2},
the branch which corresponds
to negative $\epsilon$
has the form
\[
\tilde\phi(x,\omega)
=\begin{bmatrix}
\tilde\phi\sb 1(x,\omega)
                          \\
\tilde\phi\sb 2(x,\omega)
\end{bmatrix}
=\begin{bmatrix}
(-\epsilon)^3
\varPhi\sb 1(-\epsilon x,-\epsilon)
                          \\
(-\epsilon)^2
\varPhi\sb 2(-\epsilon x,-\epsilon)
\end{bmatrix}
=\begin{bmatrix}
-\epsilon^3
\varPhi\sb 1(-\epsilon x,\epsilon)
                          \\
\epsilon^2
\varPhi\sb 2(-\epsilon x,\epsilon)
\end{bmatrix},
\]
\[
\omega=-\sqrt{m^2-\epsilon^2},
\qquad
\epsilon\ge 0,
\]
where we took into account that
$\varPhi_1(y,\epsilon)$,
$\varPhi_2(y,\epsilon)$
obtained from Proposition~\ref{prop-sw-dm} are even in 
$\epsilon$.
Comparing to
\eqref{ansatz-2},
we conclude that
this branch is related to the
$\epsilon$-positive
branch $\phi(x,\omega)$
by
\[
\tilde\phi\sb 1(x,\omega)=-\phi\sb 1(-x,\omega),
\qquad
\tilde\phi\sb 2(x,\omega)=\phi\sb 2(-x,\omega),
\]
so that
$\tilde A\sp 0(x,\omega)=A\sp 0(-x,\omega)$,
\ $\tilde{\mathbf{A}}(x,\omega)=-\mathbf{A}(-x,\omega)$.
Consequently, these two branches
have
the same magnetic field but opposite
electric field (see \cite[\S2.3 and \S5.4]{MR0187641}).
\end{remark}
\begin{remark}
We briefly consider the symmetry properties of the solitary wave
solutions:
in
\cite[\S 2]{wakano-1966},
Wakano gives the Ansatz for the solitary waves in the cylindrical
coordinates $(\rho,z=r\cos\theta,\upphi)$, from which 
symmetry properties can be deduced. For our situation the
relevant Ansatz for the Dirac wave function is
\begin{equation}\label{cyl}
\phi(x)
=\begin{bmatrix}
\varphi_1(\rho,\theta)            \\
\varphi_2(\rho,\theta)e^{i\upphi} \\
i\varphi_3(\rho,\theta)            \\
i\varphi_4(\rho,\theta)e^{i\upphi} \\
\end{bmatrix}\,.
\end{equation}
An alternative approach to the existence theorem would be to 
set the problem up and then apply the
implicit function theorem entirely within this symmetry class.
The uniqueness assertion of the implicit function theorem would then
imply that the solutions so constructed agree with those
obtained above from Proposition~\ref{prop-sw-dm}.
\end{remark}
\begin{remark}
  The solution obtained is actually a convergent power series in \(\epsilon\) since
  all mappings involved are analytic and so the analytic implicit function theorem holds.
  \end{remark}
\begin{lemma}\label{lemma-a-decay}
There is $C<\infty$ such that
\[
\eurA\sp 0(y)
\,=\,\ech\frac{\norm{\varPhi\sb 2}^2+\epsilon^2\norm{\varPhi\sb 1}^2}
{4\pi|y|}\,+\,O(\langle y\rangle^{-2})
\,,
\qquad
\abs{\eubA(y)}\le C\langle y\rangle^{-2}.
\]
\end{lemma}

\begin{proof}
We just apply the multipole expansion \cite{MR614447}
to \eqref{sys-a}. The integrands are quadratic in the components of
the Dirac field \(\phi\in H^{2,\theta} \), and hence have exponential decay
\[
\sup_{y\in\R^3}\,e^{2\theta|y|}\,\biggl(
\Bigl|\varPhi\sb{2}\sp\ast{\varPhi\sb{2}}
+\epsilon^2\varPhi\sb{1}\sp\ast{\varPhi\sb{1}}\Bigr|\,+\,
\Bigl|\varPhi\sb{1}\sp\ast\bm\sigma{\varPhi\sb{2}}
+\varPhi\sb{2}\sp\ast\bm\sigma{\varPhi\sb{1}}\Bigr|
\biggr)\,<\,\infty\,.\]
This allows that \(\eurA\sp 0 \) has leading asymptotic
behaviour given by the Coulomb law  
\[
\eurA\sp 0(y)
\,=\,\ech\frac{\norm{\varPhi\sb 2}^2+\epsilon^2\norm{\varPhi\sb 1}^2}
{4\pi|y|}\,+\,O(\langle y\rangle^{-2})
\]
as \(|y|\to+\infty\,. \) The vector potential \(\eubA \)
however has no monopole component because the currents \(\mathbf{J} \)
have zero integral when evaluated on any stationary solution which
decays rapidly at spatial infinity. Indeed for a stationary solution
the conservation law \(\partial_\mu J^\mu=0 \) implies that
\(\mathbf{J} \) is divergence-free, and hence:
$$
0
=\p\sb t\int\sb{\R^3} J\sp{0}y\sp k\,dy
=-\int\sb{\R^3}(\p\sb j J\sp j)y\sp k\,dy
=
\int\sb{\R^3} J\sp k\,dy,
\qquad
1\le k\le 3.
$$
As a consequence,
the multipole expansion implies that
\(\eubA=O(\langle y\rangle^{-2}) \).
\end{proof}

\section*{Acknowledgments}
The research of Andrew Comech was carried out
at the Institute for Information Transmission Problems
of the Russian Academy of Sciences
at the expense of the Russian Foundation
for Sciences (project 14-50-00150).
The work of David Stuart has been partially supported by STFC consolidated grant ST/P000681/1 and St John's College, Cambridge.

\bibliographystyle{sima-doi}
\bibliography{bibcomech}
\end{document}